\documentclass[10pt, twocolumn]{IEEEtran}

\usepackage{graphicx,epsfig}
\usepackage[noadjust]{cite}
\usepackage{mcite}
\usepackage{amsfonts,helvet}
\usepackage{fancyhdr}
\usepackage{threeparttable}
\usepackage{epsf,epsfig}
\usepackage{amsthm}
\usepackage{amsmath}
\usepackage{siunitx}
\usepackage{amssymb}
\usepackage{dsfont}
\usepackage{subfigure}
\usepackage{color}
\usepackage{algorithm}
\usepackage{algpseudocode}
\usepackage{algcompatible}
\usepackage{enumerate}
\usepackage{cancel}

\newtheorem{theorem}{Theorem}

\newtheorem{corollary}{Corollary}

\newtheorem{lemma}{Lemma}

\newtheorem{remark}{Remark}




\setcounter{page}{1}
\setcounter{proposition}{0}



\begin{document}

\title{On the Optimal Feedback Rate in Interference-Limited Multi-Antenna Cellular Systems}

\author{Jeonghun~Park, \emph{Student Member, IEEE}, Namyoon~Lee, \emph{Member, IEEE}, Jeffrey G. Andrews, \emph{Fellow, IEEE}, and~Robert W. Heath Jr., \emph{Felow, IEEE}
\thanks{J. Park, J. G. Andrews, and R. W. Heath Jr. are with the Wireless Networking and Communication Group (WNCG), Department of Electrical and Computer Engineering, 
The University of Texas at Austin, TX 78701, USA. (E-mail: $\left\{\right.$jeonghun, rheath$\left\}\right.$@utexas.edu, jandrews@ece.utexas.edu)

N. Lee is with Dept. of Electrical Engineering, POSTECH, 77 Cheongam-Ro. Nam-Gu. Pohang. Gyeongbuk. Korea 37673 (Email:nylee@postech.ac.kr).

This research is supported in part by a gift from Huawei Technologies Co. Ltd. and the National Science Foundation under Grant No. NSF-CCF-1514275.}}

\maketitle \setcounter{page}{1} 

\begin{abstract}
We consider a downlink cellular network where multi-antenna base stations (BSs) transmit data to single-antenna users by using one of two linear precoding methods with limited feedback: (i) maximum ratio transmission (MRT) for serving a single user or (ii) zero forcing (ZF) for serving multiple users.
The BS and user locations are drawn from a Poisson point process, allowing
expressions for the signal-to-interference coverage probability and the ergodic spectral efficiency to be derived as a function of system parameters such as the number of BS antennas and feedback bits, and the pathloss exponent. 
We find a tight lower bound on the optimum number of feedback bits to maximize the \emph{net spectral efficiency}, which captures the overall system gain by considering both of  downlink and uplink spectral efficiency using limited feedback.
Our main finding is that,
when using MRT, the optimum number of feedback bits scales linearly with the number of antennas, and logarithmically with the channel coherence time. When using ZF, the feedback scales in the same ways as MRT, but also linearly with the pathloss exponent.
The derived results provide system-level insights into the preferred channel codebook size by averaging the effects of short-term fading and long-term pathloss.
\end{abstract}

\section{Introduction}
In multi-antenna cellular systems, particularly assuming frequency division duplex (FDD), two fundamental obstacles limit the gains in spectral efficiency:
\begin{itemize}
\item \emph{Inter-cell interference (ICI)}: In a cellular network that uses universal frequency reuse, ICI is unavoidable. Severe ICI leads to low signal-to-interference plus noise ratio (SINR) of the downlink users, an operating regime where the spatial multiplexing gain vanishes \cite{4407232}.
\item \emph{Limited feedback}: The finite rate of the feedback link restricts the precision of channel quantization, which leads to inaccurate channel state information at transmitter (CSIT). In the multi-user scenario, inaccurate CSIT causes the inevitable inter-user interference (IUI).
\end{itemize}
In spite of extensive research over a few decades, system level performance considering both ICI and limited CSI feedback has been challenging to characterize in a common framework. The major difficulty has been the deficiency of a tractable model that suitably captures the both effects of the ICI and limited CSI feedback. 
In this paper, we adopt two analytical tools to resolve the challenge: (i) network modeling by stochastic geometry \cite{andrew:10} for calculating the amount of ICI, and (ii) quantization cell approximation \cite{4014384, 4299617, 1512149, 1237136} for analyzing the channel quantization error. Using these tools, we characterize the coverage probability and the ergodic spectral efficiency of a multi-antenna downlink system with limited feedback, considering full ICI in the network.
Based on the characterization, we attempt to reveal the complicated interplay between the downlink performance, limited CSI feedback, and the ICI.

 \subsection{Related Work}
There has been extensive prior work on the downlink transmission rate as a function of a finite CSI feedback rate. In toy setups where a base station (BS) serves a single user in a single cell by ignoring ICI, downlink transmission rates were characterized as a function of the codebook size and the SNR. For example, in \cite{1237152, 1237136}, a channel codebook design method for a single-user transmission was proposed by using Grassmannian line packing. 
By employing random vector quantization (RVQ), the rate loss of a point-to-point MIMO system caused by using finite rate feedback was characterized in \cite{1624652}. In \cite{4100151}, by applying RVQ, the performance of conjugate beamforming was analyzed in a MISO system.
In \cite{4770163}, the performance of the finite rate CSI feedback was characterized, assuming temporally correlated channel and a principle of designing a channel codebook for this particular condition was proposed. 
The main limitation of \cite{1237152,1237136, 4770163,1624652,4100151} is that it assumed only a single pair of a BS and a user, in which important features of a cellular network, e.g., ICI are missing.

Considering a single-cell and multi-user transmission scenario, it was shown in \cite{1715541} that the sum spectral efficiencies can increase without bound with the transmit power, provided that the CSI feedback rate linearly scales with SNR in decibels. In \cite{4299617}, by leveraging multi-user diversity gain, it was shown that a semi-orthogonal user selection method \cite{1603708} with finite rate feedback for zero-forcing (ZF) beamforming achieves the same throughput gain with the case in perfect CSIT. In \cite{5466522}, the achievable spectral efficiency was characterized by considering not only the finite rate feedback, but also the channel estimation error via downlink channel training.
In \cite{5671564}, multi-mode MIMO transmission depending on the users' channel conditions was proposed. 
In another line of research, in \cite{1577000, 5773636, 6112149}, the cost of the CSI feedback was considered. Specifically, the downlink spectral efficiency obtained by using limited CSI feedback was normalized by the uplink spectral efficiency spent by sending the feedback. Similar to
\cite{1237152,1237136, 1468321, 4770163,1624652,4100151},
the major limitation of the aforementioned work \cite{1715541,4299617,1603708,5466522,5671564,1577000,5773636, 6112149} is the use of over-simplified network model that only captures the effect of the channel quantization error, ignoring ICI; thereby, the results do not necessarily apply to cellular systems where ICI is significant. 


The effect of ICI on limited feedback has been addressed in prior work assuming deterministic BS locations. In \cite{akoumHeath10}, the the impact of ICI and delay on single user limited feedback was characterized. In \cite{5755206, 5613944, 5648782}, feedback bit allocation methods were proposed to balance resolution between intra-cluster interference channels for multi-user multi-cell coordinated beamforming. The results presented in \cite{akoumHeath10, 5755206, 5613944, 5648782} only holds for specific user locations, making it difficult to provide a system level analysis over many user locations and system parameters. 
 
An approach to model ICI is to leverage stochastic geometry, where spatial locations of BSs and users were modeled by using a homogeneous Poisson point process (PPP). Analytical expressions of the downlink cellular SIR performance were characterized by averaging the BSs' and the users' locations \cite{andrew:10}. 
Using the random network model based on PPP to model an ad hoc network, the transmission capacity was obtained when a multi-antenna transmission technique is used with limited CSI feedback in \cite{6205588, 5754751}.
In \cite{6410048, junzhang:dynamic}, BS cooperation methods with limited feedback were proposed and the SIR performance was analyzed.
There is prior work \cite{dhillon:ordering, gupta:hetnet, renzo:mimo, renzo:eid, chang:mimohetnet} that investigated the performance of MIMO in a network model built upon stochastic geometry, especially in a heterogeneous network \cite{dhillon:ordering, gupta:hetnet, chang:mimohetnet}.  
This work, however, did not consider limited feedback which is the main subject of this paper. 
In a network modeled by PPP, how the feedback rate scales with which system parameters is still a question in a fundamental cellular system where one BS serves one or multiple users through a linear precoding.
The relevant results under the previous assumptions do not hold since the SNR of the individual user is averaged by tools of stochastic geometry. 
This paper proposes an answer for such a question.

\subsection{Contributions}
In this paper, we characterize the downlink performance of a multi-antenna cellular system with limited CSI feedback. 
We consider two cases of interest, (i) single-user maximum ratio transmission (MRT) and (ii) multi-user ZF where the number of users is equal to the number of BS antennas. First we establish exact expressions for the SIR coverage probability and the ergodic spectral efficiency in integral forms as a function of the relevant system parameters: the pathloss exponent, the number of BS antennas, and the number of feedback bits. 
Subsequently, we obtain a lower bound on the optimum number of feedback bits that maximizes the \emph{net spectral efficiency}, which measures the normalized downlink gain for one channel coherence block.
The specific definition of the net spectral efficiency is provided in \eqref{net_per}.
Our key findings are summarized as follows.

Assume that $N$ is the number of BS antennas, $B$ is the number of feedback bits,  $\beta$ is the pathloss exponent, and $T_{\rm c}$ is the channel coherence time, specifically defined as the number of downlink symbols that experience the same channel fading. In single-user MRT with limited CSI feedback, for large enough $T_{\rm c}$, the optimum number of feedback bits is approximately
\begin{align} \label{intro_mrt}
B^{\star}_{\rm MRT}\approx (N-1) \log_2\left(T_{\rm c} \right),
\end{align} 
whereas in multi-user ZF with limited CSI feedback, 
for large enough $T_{\rm c}$, 
\begin{align} \label{intro_zf}
B^{\star}_{\rm ZF} \approx (N-1)\frac{\beta}{2}\log_2\left(T_{\rm c} \right).
\end{align}
In both \eqref{intro_mrt} and \eqref{intro_zf}, the optimum number of feedback bits scales linearly with the number of antennas and logarithmically with the channel coherence time, while it also scales linearly with the pathloss exponent for multi-user ZF. 
Neither expression is a function of the instantaneous SIR because all the randomness affecting SIR, e.g., short-term fading and long-term pathloss effects are averaged into the derived forms.

The paper is organized as follows. Section II introduces the models and the performance metrics. In what follows, we characterize the performance and find a lower bound on the optimum number of feedback bits for single-user MRT in Section III and for multi-user ZF in Section IV, respectively.
Section V shows simulation results for verifying the obtained results and Section VI concludes the paper.

\section{Models and Metrics}
\subsection{Network Model}
A downlink cellular network model is considered, where BSs equipped with $N$ antennas are distributed according to a homogeneous PPP, $\Phi = \left\{{\bf{d}}_i,i\in \mathbb{N} \right\}$ with density $\lambda$. In this network model, single antenna users are distributed as an independent homogeneous PPP, $\Phi_{\rm{U}} = \left\{ {\bf{u}}_i,i\in \mathbb{N} \right\}$ with density $\lambda_{{\rm{U}}}$. 
Each BS in the network has a coverage region characterized by a Voronoi tessellation. 



\subsection{Signal Model}
In each cell, $K$ users are selected to be served. For $\lambda_{{\rm U}} \gg \lambda$, at least $K$ users are in each Voronoi region with high probability, therefore each BS is able to choose $K$ users out of users located in each cell. 
We analyze performance for the user located at the origin, denoted as user $k$, $1 \le k \le K$, per Slivnyak's theorem \cite{baccelli:inria}. User $k$ is served by the BS located at ${\bf d}_1$, which is the nearest BS to the origin in $\Phi$. Other users $k'$, $k' \in \{1,...,K\}\backslash k$ are also served by the BS located at ${\bf{d}}_1$.
The BS located at ${\bf{d}}_1 $ sends $K$ information symbols to its respective $K$ associated users through a precoding matrix ${\bf{V}}_1\in\mathbb{C}^{N\times K}$,
so the received signal at user $k$ is given by
\begin{align} \label{sig_model}
y_k = \left\| {\bf{d}}_1 \right\|^{-\beta/2} {\bf{h}}_{k,1}^* {\bf{V}}_1 {\bf{s}}_1 + \sum_{i=2}^{\infty} \left\| {\bf{d}}_i \right\|^{-\beta/2} {\bf{h}}_{k,i}^* {\bf{V}}_i {\bf{s}}_i + z_k,
\end{align}
where ${\bf{h}}_{k,i} \in \mathbb{C}^{N \times 1}$ is the channel coefficient vector from the BS at ${\bf{d}}_i$ to user $k$,
${\bf{V}}_i = \left[{\bf{v}}_1^i,..., {\bf{v}}_K^i\right] \in \mathbb{C}^{N \times K}$ and $\left\| {\bf{v}}^i_k \right\| = 1$ for $k\in \{1,...,K\}$ is a precoding matrix of the BS at ${\bf{d}}_i$, ${\bf{s}}_i =\left[s_1^i,...,s_K^i \right]^T \in \mathbb{C}^{K \times 1}$ is an information symbol vector transmitted from the BS at ${\bf{d}}_i$, and $z_k \sim \mathcal{CN}\left(0,\sigma^2\right)$ is additive Gaussian noise. The pathloss exponent is $\beta > 2$.
We assume that $\mathbb{E}\left[{\bf{s}}_i {\bf{s}}_i^* \right] = P/N \cdot {\bf{I}}$ for $i \in \mathbb{N}$ and ${\bf{I}}$ is the identity matrix.
Each entry of the channel coefficient vector ${\bf{h}}_{k,i}$
is drawn from  independent and identically distributed (IID) complex Gaussian random variables, i.e., $\mathcal{CN}\left(0,1\right)$. Depending on the number of users, we consider two cases of interest: single-user MRT when $K=1$ and multi-user ZF when $K=N$. 
While the case of $1 < K < N$ is also of interest, it is less tractable and left to future work.
Henceforth, we drop the BS index $1$ for simplification. For instance, we will write the channel coefficient vector ${\bf{h}}_{k,1}$ as ${\bf{h}}_k$, the precoding matrix ${\bf{V}}_1 = \left[{\bf{v}}_1^1,..., {\bf{v}}_K^1\right] $ as ${\bf{V}} = \left[{\bf{v}}_1,..., {\bf{v}}_K\right] $. 

\subsection{Feedback Model}
Before data transmission, user $k$ learns downlink CSI ${\bf{h}}_k$ by using predefined pilot symbols sent from the BSs, and sends it back to the associated BS via a finite rate feedback link. To do this, a user quantizes the channel direction information (CDI) by using a predefined codebook that is known to both of the associated BS and user $k$. Assuming that feedback link capacity is $B$ bits, the cardinality of the codebook $\mathcal{C}$ is $2^B$, where each entry of the codebook is selected from $N$-dimensional unit norm vectors, i.e., $\mathcal{C} = \left\{{\bf{w}}_1,...,{\bf{w}}_{2^B} \right\}$ and $\left\| {\bf{w}}_i \right\| = 1$ for $i \in \left\{1,...,2^B \right\}$. By measuring the inner products between the channel direction vector $\tilde {\bf{h}}_k = {\bf{h}}_k / \left\| {\bf{h}}_k \right\|$ and the codeword vectors ${\bf w}_i$ for $i \in \{1,...,2^B\}$, a user chooses an index that provides the maximum inner product value, 
\begin{align}
i_{\max} = \mathop {\arg \max} \limits_{i = 1,...,2^B} \left| \tilde {\bf{h}}_k ^* {\bf{w}}_i \right|.
\end{align}
The chosen index $i_{\max}$ is sent to the BS. Since the BS has the same codebook as user $k$, it acquires quantized channel direction information $\hat {\bf{h}}_k = {\bf{w}}_{i_{\max}}$ from $i_{\max} $. The quantized channel information $\hat {\bf{h}}_k$ is used for designing a precoding matrix ${\bf{V}}$.

Although it is also possible to quantize the channel quality information (CQI), i.e., $\left\| {\bf{h}}_k \right\|$ and send it to the BS through the feedback link, we only consider the CDI feedback. 
If user scheduling is considered, then CQI feedback becomes more important and the results might become different.
For instance, the CQI and the CDI can be jointly exploited to select a better set of users by employing the semi-orthogonal user selection algorithm \cite{1603708}.
Additionally, in \cite{6601769}, it was also shown that the sum-rate performance is better when large portions of the feedback bits are used for CQI. We leave this issue as future work.

For analytical tractability in characterizing performance of the limited feedback strategy as a function of the codebook size, we use the quantization cell approximation technique  \cite{4014384, 4299617, 1512149, 1237136}, which assumes that each quantization cell is a Voronoi region of a spherical cap. 
This is a standard approach in vector quantization \cite{1056067, gersho:book} and it is used to deal with the irregular shape of the Voronoi quantization regions. 
In this paper, we call this technique the spherical-cap approximation of vector quantization (SCVQ).
Using SCVQ, the area of a quantization cell is $2^{-B}$ when the number of feedbacks is $B$. This assumption leads to the following approximation of the cumulative distribution function (CDF) of quantization error \cite{4299617}.
\begin{align} \label{sin_cdf}
F_{{\rm sin^2}\theta_k}\left(x\right) = \left\{\begin{array}{cc}2^Bx^{N-1}, &  0\le x \le \delta \\ 1, & \delta \le x \end{array}, \right.
\end{align}
where ${\rm sin^2}\theta_k = 1 - \left|\tilde {\bf{h}}_k^* \hat{\bf{h}}_k \right|^2$ and $\delta = 2^{-\frac{B}{N-1}}$. In \cite{4299617}, for any quantization codebook that has a quantization error CDF $F_{{\rm sin^2}\tilde \theta_k}\left(x\right)$, we have $F_{{\rm sin^2}\theta_k}\left(x\right) \ge F_{{\rm sin^2}\tilde \theta_k}\left(x\right)$. 
Due to this property, SCVQ provides an upper bound performance with limited CSI feedback.
Nevertheless, the SCVQ is useful in analyzing the quantization error effects accurately, as the downlink rate performance with the SCVQ is known to be tight with that using RVQ, which provides a lower bound performance of the limited feedback strategy \cite{4299617}.  

\subsection{Performance Metrics}
In this section, we define the SIR, the SIR complementary cumulative distribution function (CCDF), and the ergodic spectral efficiency for the two cases we consider in this paper, i.e., the single-user MRT and the multi-user ZF. We focus on the SIR instead of the SINR, since cellular systems are usually interference limited \cite{andrew:10}.
We also assume that all the cells in a network employ the same beamforming strategy. For example, in single-user MRT, each cell serves a single user by using MRT beamforming. 
\subsubsection{Single-User MRT}
In single-user MRT, only one user is active, i.e., $K=1$. In this case, the BS designs a precoding matrix to maximize desired channel gain.
Since perfect channel information ${\bf{h}}_1$ is impossible to obtain, the actual channel gain $\left|{{{\bf{h}}}_1^*} {\bf{v}}_1 \right|^2$ cannot be used to design ${\bf{v}}_1$. Instead, by exploiting the quantized channel direction information $\hat {\bf{h}}_1$ obtained from CSI feedback, the quantized channel gain $\left|{\hat{{\bf{h}}}_1^*} {\bf{v}}_1 \right|^2$ is used.
To maximize this, ${\bf{v}}_1$ is designed as ${\bf{v}}_1 = \hat {\bf{h}}_1$. Applying this, the signal model \eqref{sig_model} is rewritten as 
\begin{align}
y_1 = \left\| {\bf{d}}_1 \right\|^{-\beta/2} {\bf{h}}_{1}^* \hat{{\bf{h}}}_1 {{s}}_1 + \sum_{i=2}^{\infty} \left\| {\bf{d}}_i \right\|^{-\beta/2} {\bf{h}}_{1,i}^* {{\bf{v}}}_i {{s}}_i + z_k,
\end{align}
The instantaneous SIR of the typical user is given by
\begin{align} \label{sir_su}
{\rm{SIR}}_{{\rm{MRT}}} = \frac{P \left\| {\bf{d}}_1 \right\|^{-\beta} \left|{\bf{h}}_1^*  \hat{{\bf{h}}}_1 \right|^2 }{P \sum_{i=2}^{\infty}\left\| {\bf{d}}_i \right\|^{-\beta} \left| {\bf{h}}_{1,i}^*{{\bf{v}}}_i \right|^{2} }.
\end{align}
With \eqref{sir_su}, we define the CCDF of the instantaneous SIR with the target SIR $\gamma$ as follows.
\begin{align} \label{sir_ccdf_su}
P_{\rm MRT} = \mathbb{P}\left[{\rm SIR_{MRT}} > \gamma \right].
\end{align}
The ergodic spectral efficiency of single-user MRT is defined as
\begin{align} \label{dfn_rate_su}
R_{\rm{MRT}} = \mathbb{E}\left[\log_2\left(1 + {\rm{SIR}}_{{\rm{MRT}}} \right) \right].
\end{align}
The expectation is taken over the multiple randomness including fadings, locations of BSs and users, and the realizations of a random codebook. 

\subsubsection{Multi-User ZF}
In multi-user ZF, $K=N \ge 2$ users are selected for receiving information symbols. For multi-user communication, the ZF beamformer is used. 
Since we only consider the case where the number of active users is the same as the number of BS antennas, the only objective of the ZF beamformer is mitigating the IUI, but it is not used for increasing the desired signal power. The more general case $K<N$ is also of interest, but we leave it as future work.
With the quantized CSI ${\hat{\bf{h}}}_k$, $k\in \{1,...,K\}$, the precoding vector ${\bf{v}}_k$ is designed to satisfy
\begin{align}
\hat {\bf{h}}_{k'}^*{\bf{v}}_k = 0,\;\;k' \in \{1,...,K\} \backslash k.
\end{align}
Since the obtained CSI is not perfect, i.e., ${\hat{\bf{h}}}_k \ne \tilde{{\bf{h}}}_k$, the IUI is not perfectly removed. Considering this, the signal model \eqref{sig_model} is rewritten as
\begin{align}
y_k &= \left\| {\bf{d}}_1 \right\|^{-\beta/2} {\bf{h}}_{k}^* {\bf{v}}_k {{s}}_k + 
\sum_{k' = 1, k'\ne k}^{K}\left\| {\bf{d}}_1 \right\|^{-\beta}  {\bf{h}}_{k}^* {\bf{v}}_{k'} s_{k'} \nonumber \\
&+\sum_{i=2}^{\infty} \left\| {\bf{d}}_i \right\|^{-\beta/2} {\bf{h}}_{k,i}^* {\bf{V}}_i {\bf{s}}_i + z_k.
\end{align}
Denoting the remaining inter-user interference and the inter-cell interference as $I_{\rm U} = P/K \sum_{ k' =1,k' \ne k}^{K}  \left\| {\bf{d}}_1 \right\|^{-\beta} \left| {\bf{h}}_{k}^* {\bf{v}}_{k'} \right|^2$ and $I_{\rm C} = P/K\sum_{i=2}^{\infty}\left\| {\bf{d}}_i \right\|^{-\beta} \left\| {\bf{h}}_{k,i}^*{\bf{V}}_i \right\|^{2}$, 
the instantaneous SIR of the typical user $k$ is given by
\begin{align} \label{sir_mu}
{\rm{SIR}}_{{\rm{ZF}}}^k = \frac{P/K \left\| {\bf{d}}_1 \right\|^{-\beta} \left| {\bf{h}}_k^*  {\bf{v}}_k \right|^2 }{I_{\rm U} + I_{\rm C} }.
\end{align}

By leveraging the instantaneous SIR, the CCDF of the instantaneous SIR is defined as
\begin{align} \label{sir_ccdf_mu}
P_{\rm ZF}^{k} = \mathbb{P}\left[{\rm SIR}_{\rm ZF}^{k} > \gamma \right].
\end{align}
The ergodic spectral efficiency is also defined as
\begin{align} \label{dfn_rate_mu}
R_{\rm ZF}^k = \mathbb{E} \left[\log_2\left(1 + {\rm{SIR}}_{\rm ZF}^{k} \right) \right].
\end{align}

\subsection{Net Spectral Efficiency}
We define the net spectral efficiency to measure the difference between downlink and uplink spectral efficiencies. This is essential for evaluating the overall system gain, as the downlink spectral efficiency improvement comes at the cost of the uplink spectral efficiency spent by sending CSI feedback.
When a user sends $B$ feedback bits, the net spectral efficiency is defined as
\begin{align} \label{net_per}
 R_{\rm Net}\left(B\right) =R\left(B\right) - B/T_{\rm c},
\end{align}
where $R\left(B\right)$ is the downlink ergodic spectral efficiency and $T_{\rm c}$ is the channel coherence time, specifically defined as the number of downlink symbols that experience the same channel fading. 
From \eqref{net_per}, the net spectral efficiency is penalized by $B/T_{\rm c}$ when using $B$ feedback bits, so that $1/T_{\rm c}$ behaves as a penalization factor.
The reason for choosing $1/T_{\rm c}$ as the penalization factor is as follows.
Rewriting \eqref{net_per} by multiplying $T_{\rm c}$ on each side, we have
\begin{align} \label{eq:net_tc}
T_{\rm c} R_{\rm Net}(B) = T_{\rm c} R(B) - B.
\end{align}
In \eqref{eq:net_tc}, we find that the net spectral efficiency multiplied with $T_{\rm c}$ consists of the sum downlink spectral efficiency corresponding to one channel coherence block, subtracted by the number of feedback bits used for quantizing the corresponding channel. From this observation, the net spectral efficiency \eqref{net_per} measures the normalized net gain for one channel coherence block when using $B$ feedback bits. 
In this case, when $T_{\rm c}$ is large, the number of feedback bits should be increased since the inaccuracy of the CSIT can cause a significant effect to a large number of downlink symbols. 
We also note that this interpretation was also adopted in \cite{1577000, 5773636}. 
In this paper, we assume $T_{\rm c} > 100$, which means that more than $100$ downlink symbols experience the same channel fading.

\section{Single-User Maximum Ratio Transmission}
In this section, we analyze the CCDF of the instantaneous SIR and the ergodic spectral efficiency for MRT. Based on the derived expressions, we obtain a lower bound on the optimum number of feedback bits that maximizes the net spectral efficiency $R_{\rm Net}(B)$.

\subsection{SIR CCDF Characterization}
We first attempt to derive the CCDF of the instantaneous SIR as an integral form. 
Theorem \ref{theo_sir_ccdf_su} is presented for the main result of this subsection.

\begin{theorem} \label{theo_sir_ccdf_su}
When the number of feedback bits is $B$, the CCDF of the instantaneous SIR of single-user MRT, defined in \eqref{sir_ccdf_su}, is
\begin{align} \label{theo_sir_ccdf_su_main}
\mathbb{P}\left[{\rm SIR_{MRT}} > \gamma \right] =  \sum_{m=0}^{N-1} \frac{\gamma^m}{m!} (-1)^m \left. \frac{\partial^m \mathcal{L}_{I/\cos^2\theta_1}(s)}{\partial s^m} \right| _{s = {\gamma}},
\end{align}
where $I = \left\| {\bf{d}}_1 \right\|^{\beta}   { \sum_{i=2}^{\infty}\left\| {\bf{d}}_i \right\|^{-\beta} \left| {\bf{h}}_{1,i}^*{{{\bf{V}}}}_i \right|^{2} }$, ${\rm cos}^2 \theta_1 = \left|{\tilde{\bf{h}}}_1 \hat{\bf{h}}_1 \right|^2$ and $\mathcal{L}_{I/\cos^2\theta_1}(s)$ is the Laplace transform of the random variable $I/\cos^2\theta_1$, which is given by
\begin{align} \label{laplace_I_cos}
&\mathcal{L}_{I/\cos^2\theta_1}(s) = \nonumber \\
&\int_{0}^{2^{-\frac{B}{N-1}}} \frac{\beta - 2}{\beta - 2 + 2\frac{s}{1 - x} \cdot {}_2F_1\left(1, \frac{-2+\beta}{\beta}, 2- \frac{2}{\beta}, -\frac{s}{1-x} \right)} \cdot \nonumber \\
&\;\;\;\;\;\;\;\;\;\;\;\;\;\;  2^B (N-1)x^{N-2} \ {\rm d} x.
\end{align}
$_2F_1\left(\cdot, \cdot, \cdot, \cdot \right) $ is the Gauss-hypergeometric function defined as
\begin{align}
_2F_1\left(a,b,c,z \right) = \frac{\Gamma(c)}{\Gamma(b)\Gamma(c-b)}\int_{0}^{1} \frac{t^{b-1}(1-t)^{c-b-1}}{(1-tz)^{a}} {\rm d} t.
\end{align}
\end{theorem}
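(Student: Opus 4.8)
The plan is to reduce the SIR tail event to a Gamma tail probability and then recognize the resulting moment-type expectation as derivatives of a Laplace transform. Writing the desired gain as $|{\bf h}_1^* \hat{\bf h}_1|^2 = \|{\bf h}_1\|^2\,|\tilde{\bf h}_1^* \hat{\bf h}_1|^2 = \|{\bf h}_1\|^2 \cos^2\theta_1$, and noting that in the $K=1$ case ${\bf V}_i$ reduces to the single column ${\bf v}_i$, the event $\{{\rm SIR}_{\rm MRT}>\gamma\}$ becomes $\{\|{\bf h}_1\|^2 > \gamma\,I/\cos^2\theta_1\}$ after cancelling $P$ and multiplying through by $\|{\bf d}_1\|^\beta$, where $I$ is exactly the quantity in the statement. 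The structural facts I would invoke are: (i) $g \bydef \|{\bf h}_1\|^2$ is $\mathrm{Gamma}(N,1)$; (ii) since ${\bf h}_1$ is isotropic, its magnitude $g$ is independent of its direction and hence of the quantization loss $\cos^2\theta_1$; and (iii) the serving channel ${\bf h}_1$ is independent of all interference quantities, so $g$ is independent of $J \bydef I/\cos^2\theta_1$. Conditioning on $J$ and using the Erlang tail identity $\mathbb{P}[g>t]=e^{-t}\sum_{m=0}^{N-1}t^m/m!$ for integer $N$ gives
\begin{align}
\mathbb{P}\left[{\rm SIR}_{\rm MRT} > \gamma\right] = \sum_{m=0}^{N-1}\frac{\gamma^m}{m!}\,\mathbb{E}\!\left[J^m e^{-\gamma J}\right].
\end{align}
Recognizing $\mathbb{E}[J^m e^{-sJ}] = (-1)^m \partial_s^m \mathcal{L}_J(s)$ and evaluating at $s=\gamma$ yields \eqref{theo_sir_ccdf_su_main}, so everything reduces to computing $\mathcal{L}_J = \mathcal{L}_{I/\cos^2\theta_1}$.

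For the Laplace transform I would again use independence of $\cos^2\theta_1$ and $I$ to condition on the quantization error. Setting $\sin^2\theta_1 = x$ so that $\cos^2\theta_1 = 1-x$, and integrating against the SCVQ density $f_{\sin^2\theta_1}(x) = 2^B(N-1)x^{N-2}$ on $[0,\delta]$ obtained from \eqref{sin_cdf}, gives
\begin{align}
\mathcal{L}_{I/\cos^2\theta_1}(s) = \int_0^{\delta}\mathcal{L}_I\!\left(\tfrac{s}{1-x}\right)2^B(N-1)x^{N-2}\,{\rm d}x,
\end{align}
which already reproduces the outer integral and density of \eqref{laplace_I_cos}. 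It thus remains to show $\mathcal{L}_I(s') = (\beta-2)/\big(\beta-2+2s'\,{}_2F_1(1,\tfrac{\beta-2}{\beta},2-\tfrac{2}{\beta},-s')\big)$ with $s'=s/(1-x)$.

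To evaluate $\mathcal{L}_I(s')$ I would condition on the serving distance $\|{\bf d}_1\|=r$, whose density is $2\pi\lambda r e^{-\pi\lambda r^2}$, so the interfering BSs form a PPP of intensity $\lambda$ outside $B(0,r)$. The interference gains $|{\bf h}_{1,i}^* {\bf v}_i|^2$ are $\mathrm{Exp}(1)$ (a standard complex Gaussian vector projected onto an independent unit vector), so applying the PGFL of the PPP and averaging the gains via $\mathbb{E}[e^{-ag}]=1/(1+a)$ produces
\begin{align}
\mathbb{E}\!\left[e^{-s' I}\,\big|\,r\right] = \exp\!\left(-2\pi\lambda\int_r^\infty \frac{s' r^\beta v^{-\beta}}{1+s' r^\beta v^{-\beta}}\,v\,{\rm d}v\right).
\end{align}
Substituting $u=v/r$ turns the exponent into $-2\pi\lambda r^2 C$ with $C=\int_1^\infty s'u/(u^\beta+s')\,{\rm d}u$ independent of $r$; integrating out $r$ against its Rayleigh-type density (a Gaussian integral in $r^2$) then collapses the expression to the clean form $\mathcal{L}_I(s')=1/(1+2C)$.

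The last and most delicate step is identifying $C$ with the hypergeometric term. I would substitute $y=u^\beta$ to write $C=(s'/\beta)\int_1^\infty y^{2/\beta-1}/(y+s')\,{\rm d}y$, then $t=1/y$ to map the range to $[0,1]$ and obtain $C=(s'/\beta)\int_0^1 t^{-2/\beta}/(1+s't)\,{\rm d}t$. Comparing with the Euler integral representation of ${}_2F_1$ quoted in the statement, taking $a=1$, $b=(\beta-2)/\beta$, $c=2-2/\beta$ (so that $c-b-1=0$ and $b-1=-2/\beta$), and $z=-s'$, and using $\Gamma(c)/[\Gamma(b)\Gamma(c-b)]=(\beta-2)/\beta$, identifies the integral and gives $C=\tfrac{s'}{\beta-2}\,{}_2F_1(1,\tfrac{\beta-2}{\beta},2-\tfrac{2}{\beta},-s')$. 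Inserting this into $1/(1+2C)$ produces the stated $\mathcal{L}_I$, and hence \eqref{laplace_I_cos}. I expect this hypergeometric matching --- choosing the substitutions so the Euler integral lines up exactly and verifying the Gamma prefactor --- to be the main technical obstacle, while the probabilistic reduction and the PGFL computation are comparatively routine.
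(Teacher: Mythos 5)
Your proposal is correct and follows essentially the same route as the paper's proof: the Erlang tail of $\|{\bf h}_1\|^2$ plus the Laplace-derivative identity to get \eqref{theo_sir_ccdf_su_main}, then the PGFL of the PPP conditioned on the serving distance, the Rayleigh-type integration over that distance, the Euler-integral identification of the ${}_2F_1$ term, and finally the SCVQ density of $\sin^2\theta_1$. The only cosmetic difference is that you factor the expectation over the quantization error outside the PPP computation from the start, whereas the paper carries $\cos^2\theta_1$ through and integrates it out at the end; the two orderings are equivalent.
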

\begin{proof}
See Appendix \ref{proof:thm1}.
\end{proof}

The obtained CCDF of the instantaneous SIR expression contains the relevant system parameters, e.g., the number of BS antennas $N$, the pathloss exponent $\beta$, and the number of feedback bits $B$. 

\subsection{Ergodic Spectral Efficiency Characterization}

Before deriving the ergodic spectral efficiency, we derive Lemma \ref{lem_s_mgf_su} which  provides the Laplace transform of the desired channel gain in terms of the number of feedback bits $B$, and refer Lemma \ref{lem_useful} which presents an integral form of the ergodic spectral efficiency.

\begin{lemma} \label{lem_s_mgf_su}
In single-user MRT, the Laplace transform of the desired channel gain $\left| {\bf{h}}_1^* \hat{{\bf{h}}}_1 \right|^2$ is
\begin{align}
\mathbb{E}\left[e^{-s\left| {\bf{h}}_1^* \hat{{\bf{h}}}_1 \right|^2} \right] &= \left(\frac{1}{1+s}\right) \left(\frac{1}{1+s\left(1 - 2^{-\frac{B}{N-1}} \right)} \right)^{N-1}.
\end{align}
\end{lemma}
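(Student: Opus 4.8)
The plan is to exploit the standard decomposition of the desired channel gain into independent magnitude and direction parts and then average over the quantization error using the CDF \eqref{sin_cdf}. First I would write
\begin{align}
\left|{\bf h}_1^* \hat{\bf h}_1\right|^2 = \left\|{\bf h}_1\right\|^2 \cos^2\theta_1 = \left\|{\bf h}_1\right\|^2\left(1 - \sin^2\theta_1\right), \nonumber
\end{align}
where $\cos^2\theta_1 = |\tilde{\bf h}_1^* \hat{\bf h}_1|^2$. These two factors are independent: $\|{\bf h}_1\|^2$ is the squared norm of an $N$-dimensional i.i.d.\ $\mathcal{CN}(0,1)$ vector and is therefore $\mathrm{Gamma}(N,1)$-distributed (a sum of $N$ unit-mean exponentials), whereas $\cos^2\theta_1$ is a function only of the channel direction $\tilde{\bf h}_1$ and the codebook, and the norm of a complex Gaussian vector is independent of its direction.

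Next I would condition on the quantization error $Z = \sin^2\theta_1$. Since the Laplace transform of a $\mathrm{Gamma}(N,1)$ random variable is $(1+s)^{-N}$, conditioning gives $\mathbb{E}[e^{-s\|{\bf h}_1\|^2(1-Z)}\mid Z] = (1+s(1-Z))^{-N}$. Under SCVQ the density of $Z$ is obtained by differentiating \eqref{sin_cdf}, namely $f_Z(z) = 2^B(N-1)z^{N-2}$ on $[0,\delta]$ with $\delta = 2^{-\frac{B}{N-1}}$. Averaging then reduces the lemma to evaluating the single integral
\begin{align}
\mathbb{E}\!\left[e^{-s\left|{\bf h}_1^* \hat{\bf h}_1\right|^2}\right] = \int_0^{\delta}\frac{2^B (N-1) z^{N-2}}{\left(1 + s(1-z)\right)^N}\,{\rm d}z. \nonumber
\end{align}

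The main obstacle is showing that this integral collapses to the clean product form. I would substitute $u = 1 + s(1-z)$ (with $a := 1+s$) and then rescale $u = a r$, reducing the integrand to a power-law form, and use the exact antiderivative identity $\frac{\rm d}{{\rm d}r}\left[(1-r)^{N-1} r^{-(N-1)}\right] = -(N-1)(1-r)^{N-2} r^{-N}$. Because this antiderivative is exact, the integral telescopes: the boundary term at the endpoint corresponding to $\sin^2\theta_1 = 0$ vanishes for $N \ge 2$, while the term at $\sin^2\theta_1 = \delta$ yields exactly $(1+s)^{-1}\left(1 + s(1-\delta)\right)^{-(N-1)}$ once the prefactor $2^B = \delta^{-(N-1)}$ and the powers of $s$ cancel against $(a-b)^{N-1} = (s\delta)^{N-1}$, where $b := 1 + s(1-\delta)$. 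Substituting $\delta = 2^{-\frac{B}{N-1}}$ then gives the claimed expression. The only delicate point is verifying that the boundary contributions and the powers of $s$ and $\delta$ cancel precisely; I would sanity-check the algebra at $N=2$, where $Z$ is uniform on $[0,\delta]$ and the integral is elementary. As a useful (though unneeded) consistency check, the product form says the desired gain is distributed as the sum of a unit-mean exponential and $(1-\delta)$ times an independent $\mathrm{Gamma}(N-1,1)$ variable.
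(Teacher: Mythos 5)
Your proposal is correct and follows essentially the same route as the paper: decompose $\left|{\bf h}_1^*\hat{\bf h}_1\right|^2 = \left\|{\bf h}_1\right\|^2\cos^2\theta_1$, use independence of norm and direction to condition and apply the $\mathrm{Gamma}(N,1)$ Laplace transform, then average over $\sin^2\theta_1$ via the SCVQ law \eqref{sin_cdf}. The only difference is that you explicitly evaluate the resulting integral (your substitution and antiderivative identity check out, and the boundary term at $\sin^2\theta_1=0$ indeed vanishes for $N\ge 2$), whereas the paper simply asserts this final step as ``(b) follows \eqref{sin_cdf}.''
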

\begin{proof}
From the definition of the Laplace transform, 
\begin{align}
&\mathbb{E}\left[e^{-s\left| {\bf{h}}_1^* {\hat{\bf{h}}}_1 \right|^2} \right] 
=\mathbb{E}\left[e^{-s\left\| {\bf{h}}_1 \right\|^2 \left| \tilde {\bf{h}}_1^* \hat {\bf{h}}_1 \right|^2} \right] 
= \mathbb{E}\left[e^{-s\left\| {\bf{h}}_1 \right\|^2 \cos^2 \theta_1}  \right] \nonumber \\
&\mathop {=} \limits^{(a)} \mathbb{E}_{\cos\theta_1}\left[\left(\frac{1}{1+s\cos^2\theta_1} \right)^N \right] \nonumber \\
&= \mathbb{E}_{\sin\theta_1}\left[\left(\frac{1}{1+s\left( 1 - \sin^2 \theta_1\right)} \right)^N \right] \nonumber \\
&\mathop = \limits^{(b)} \left(\frac{1}{1+s}\right) \left(\frac{1}{1+s\left(1 - 2^{-\frac{B}{N-1}} \right)} \right)^{N-1},
\end{align}
where (a) follows from $\left\|{\bf{h}}_1 \right\|^2 \sim \chi^2_{2N}$, and (b) follows \eqref{sin_cdf}.
\end{proof}

Notice that when random beamforming is used, which requires no CSI feedback, the desired channel gain $H_1^{{\rm RB}}=\left| {\bf{h}}_1^* {\bf{v}}_1 \right|^2$ is distributed as Chi-squared with degrees of freedom two. The corresponding Laplace transform, therefore, is $\mathbb{E}\left[e^{-sH_1^{{\rm RB}}} \right]=\frac{1}{1+s}$. Whereas, when perfect CSIT is used for MRT, the desired channel power denoted as  $H_1^{{\rm CSIT}}=\left| {\bf{h}}_1^* {\bf{v}}_1 \right|^2$ is distributed as Chi-squared with $2N$ degrees of freedom; so its Laplace transform is $\mathbb{E}\left[e^{-sH_1^{{\rm CSIT}}} \right]=\left(\frac{1}{1+s}\right)^N$. As a result, we confirm that the Laplace transform of the desired channel power with limited CSI feedback is lower and upper bounded by the two Laplace transforms, namely, 
\begin{align}
\frac{1}{1+s} \geq \left(\frac{1}{1+s}\right) \left(\frac{1}{1+s\left(1 - 2^{-\frac{B}{N-1}} \right)} \right)^{N-1} \geq\left(\frac{1}{1+s}\right)^N
\end{align}
for all $s\geq 0$, $B\geq 1$, and some $N\geq 1$. This relationship clearly exhibits that the MRT gain with limited feedback is larger than random beamforming gain from a Laplace transform perspective. For example, the additional gain of MRT with limited CSI feedback compared to random beamforming is $\left({1}/{1+s\left(1 - 2^{-\frac{B}{N-1}} \right)} \right)^{N-1}$ by using $B$ feedback bits. Further, we show that the Laplace transform of MRT with limited feedback converges to that of MRT with perfect CSIT as $B$ goes infinity.


Next, Lemma \ref{lem_useful} is derived to give an integral form of the ergodic spectral efficiency. 
\begin{lemma}[\cite{useful}, Lemma 1] \label{lem_useful}
Let $x_1,...,x_N,y_1,...,y_M$ be arbitrary non-negative random variables. Then
\begin{align}
&\mathbb{E}\left[\ln \left( 1+ \frac{\sum_{n=1}^{N}x_n  }{\sum_{m=1}^{M} y_m + 1} \right) \right]  \nonumber \\
&= \int_{0}^{\infty} \frac{\mathcal{M}_y\left(z\right) - \mathcal{M}_{x,y}\left(z\right)}{z}{\rm{exp}}\left(-z\right) {\rm d}z,
\end{align}
where $\mathcal{M}_y\left(z\right) = \mathbb{E}\left[e^{-z\sum_{m=1}^{M}y_m} \right]$ and $\mathcal{M}_{x,y}\left(z\right) = \mathbb{E} \left[ e^{-z\left( \sum_{n=1}^{N} x_n + \sum_{m=1}^{M} y_m \right)} \right]$.
\end{lemma}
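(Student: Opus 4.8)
The plan is to reduce the expected logarithm to a pair of Laplace-transform evaluations by invoking the Frullani-type integral representation of the logarithm,
\begin{align} \label{plan:frullani}
\ln\left(\frac{b}{a}\right) = \int_{0}^{\infty} \frac{e^{-az} - e^{-bz}}{z}\,{\rm d}z, \qquad a,b > 0,
\end{align}
which turns a difference of logarithms into an integral whose integrand is a pure exponential — exactly the object whose expectation is a moment generating function. Writing $X = \sum_{n=1}^{N} x_n$ and $Y = \sum_{m=1}^{M} y_m$, the first step is the algebraic identity $1 + X/(Y+1) = (X+Y+1)/(Y+1)$, so the argument of the logarithm is a ratio $b/a$ with $a = Y+1$ and $b = X+Y+1$.

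Second, I would substitute these into \eqref{plan:frullani} and factor the common $e^{-z}$ out of both exponentials, obtaining
\begin{align}
\ln\left(1 + \frac{X}{Y+1}\right) = \int_{0}^{\infty} \frac{e^{-Yz} - e^{-(X+Y)z}}{z}\,e^{-z}\,{\rm d}z.
\end{align}
Taking the expectation of both sides and interchanging $\mathbb{E}[\cdot]$ with the integral over $z$ then yields the claimed formula, since $\mathbb{E}[e^{-Yz}] = \mathcal{M}_y(z)$ and $\mathbb{E}[e^{-(X+Y)z}] = \mathcal{M}_{x,y}(z)$ by the very definitions of $X$ and $Y$, matching the two Laplace transforms in the statement.

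The main obstacle is justifying the interchange of expectation and integration. The clean route is to note that, because every $x_n \ge 0$ we have $X \ge 0$, hence $e^{-(X+Y)z} \le e^{-Yz}$ pointwise, so the integrand $\frac{e^{-Yz}-e^{-(X+Y)z}}{z}\,e^{-z}$ is non-negative for all $z > 0$. This non-negativity lets me apply Tonelli's theorem to swap the order of integration (over the probability space and over $z$) with no integrability hypothesis beyond the non-negativity already assumed of the random variables; the identity then holds in $[0,\infty]$. I would also record the harmless behaviour near $z = 0$: the bracket $e^{-Yz} - e^{-(X+Y)z}$ vanishes linearly as $z \to 0$, cancelling the $1/z$ singularity, so the integrand stays finite there. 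With the swap justified, identifying the two inner expectations as $\mathcal{M}_y$ and $\mathcal{M}_{x,y}$ completes the argument.
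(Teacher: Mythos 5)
Your proposal is correct. The paper itself offers no proof of this lemma---it simply defers to Lemma 1 of the cited reference---and your argument via the Frullani representation $\ln(b/a)=\int_{0}^{\infty}\frac{e^{-az}-e^{-bz}}{z}\,{\rm d}z$ with $a=Y+1$, $b=X+Y+1$, followed by Tonelli's theorem (justified by the pointwise non-negativity of the integrand since $X\ge 0$), is exactly the standard derivation of that referenced result; it cleanly supplies the proof the paper omits.
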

\begin{proof}
See Lemma 1 in the reference \cite{useful}.
\end{proof}

Now, we obtain the ergodic spectral efficiency for single-user MRT in an integral form.
\begin{theorem} \label{rate_sing}
The ergodic spectral efficiency of the single-user MRT with $B$ bits feedback is  
\begin{align} \label{erate_single}
&\mathbb{E}\left[\log_2 \left(1 + {\rm{SIR}}_{\rm MRT} \right) \right]  \nonumber\\
&= \log_2 e\int_{0}^{\infty}\frac{1}{z}\left(1 - \left(\frac{1}{1+z}\right) \left(\frac{1}{1+z\left(1 - 2^{-\frac{B}{N-1}} \right)} \right)^{N-1} \right) \nonumber \\
&\;\;\;\;\;\;\;\;\;\;\;\;\;\;\;\;\; \cdot \left( \frac{1}{1 + 2z\cdot \frac{ {}_2F_1\left(1, \frac{-2+\beta}{\beta}, 2- \frac{2}{\beta}, -z \right)}{\beta-2}}\right) {\rm d} z.
\end{align}
\end{theorem}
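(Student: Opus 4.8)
The plan is to reduce everything to the two Laplace transforms that are already available. First I would rewrite the SIR in \eqref{sir_su} as a simple ratio ${\rm SIR}_{\rm MRT} = S/I$, where $S = \left| {\bf h}_1^* \hat{\bf h}_1 \right|^2$ is the desired channel gain and $I = \left\| {\bf d}_1 \right\|^{\beta}\sum_{i=2}^{\infty}\left\| {\bf d}_i \right\|^{-\beta}\left| {\bf h}_{1,i}^*{\bf v}_i \right|^2$ is the normalized aggregate interference, exactly the quantity appearing in Theorem \ref{theo_sir_ccdf_su}. The structural observation that makes the rest work is independence: $S$ is a function only of the desired channel ${\bf h}_1$ and its quantization $\hat{\bf h}_1$, whereas $I$ is a function only of the interferer locations, the interfering channels ${\bf h}_{1,i}$, their precoders ${\bf v}_i$, and the serving distance $\left\| {\bf d}_1 \right\|$. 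Since the serving distance cancels between numerator and denominator of the ratio, no residual coupling survives, so $S$ and $I$ are independent.

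Second, I would invoke Lemma \ref{lem_useful} with the single signal term $S$ and the interference term $I$, specialized to the interference-limited (noiseless) regime in which the additive constant in the denominator is absent. Concretely, using the Frullani-type identity $\ln(1+a/b) = \int_0^\infty z^{-1}\left(e^{-bz}-e^{-(a+b)z}\right)\,{\rm d}z$ for $a,b>0$, together with Fubini and the independence of $S$ and $I$, the ergodic rate becomes
\[
\mathbb{E}\left[\ln\left(1+{\rm SIR}_{\rm MRT}\right)\right] = \int_0^\infty \frac{1}{z}\,\mathcal{L}_I(z)\left(1-\mathcal{L}_S(z)\right)\,{\rm d}z,
\]
where $\mathcal{L}_S(z)=\mathbb{E}[e^{-zS}]$ and $\mathcal{L}_I(z)=\mathbb{E}[e^{-zI}]$. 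This is precisely Lemma \ref{lem_useful} with the $e^{-z}$ noise factor set to unity, which is legitimate here because the SIR carries no noise floor. Multiplying by $\log_2 e$ converts to base two.

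Third, I would substitute the two transforms. For $\mathcal{L}_S$, Lemma \ref{lem_s_mgf_su} gives $\mathcal{L}_S(z) = \left(\frac{1}{1+z}\right)\left(\frac{1}{1+z\left(1-2^{-\frac{B}{N-1}}\right)}\right)^{N-1}$, which is exactly the first parenthesized factor in \eqref{erate_single}. For $\mathcal{L}_I$, I would reuse the stochastic-geometry computation already carried out for Theorem \ref{theo_sir_ccdf_su}: because each interfering BS applies a unit-norm precoder ${\bf v}_i$ independent of ${\bf h}_{1,i}$, the per-link gain $\left| {\bf h}_{1,i}^*{\bf v}_i \right|^2$ is $\mathrm{Exp}(1)$; conditioning on $\left\| {\bf d}_1 \right\|=r$, averaging the exponential marks, applying the probability generating functional of the PPP over interferers lying outside radius $r$, and finally averaging over the Rayleigh-distributed $r$ collapses the density $\lambda$ and yields $\mathcal{L}_I(z) = \frac{\beta-2}{\beta-2+2z\,{}_2F_1\left(1,\frac{-2+\beta}{\beta},2-\frac{2}{\beta},-z\right)}$, i.e.\ the second factor in \eqref{erate_single}. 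This is just the core term of \eqref{laplace_I_cos} without the quantization-error slice $x$.

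The main obstacle is the evaluation of $\mathcal{L}_I(z)$: it demands correctly handling the nearest-BS conditioning (interferers must lie beyond the serving distance), carrying out the PGFL integral, and performing the radial substitution that reduces the double integral over $r$ and the interferer radial coordinate to the closed-form Gauss-hypergeometric expression. Fortunately this is the same computation underlying Theorem \ref{theo_sir_ccdf_su}, so it can be cited rather than repeated. The only remaining care is to justify the interchange of expectation with the Frullani integral via dominated/monotone convergence, which is clean because $I>0$ almost surely in the PPP (guaranteeing integrability of the integrand near $z=0$) and $1-\mathcal{L}_S(z)$ together with $\mathcal{L}_I(z)$ ensures decay as $z\to\infty$.
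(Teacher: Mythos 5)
Your proposal is correct and follows essentially the same route as the paper: apply Lemma \ref{lem_useful} to write the ergodic rate as $\log_2 e\int_0^\infty z^{-1}\left(1-\mathcal{M}_S(z)\right)\mathcal{M}_I(z)\,{\rm d}z$, substitute Lemma \ref{lem_s_mgf_su} for the signal term, and reuse the PGFL computation from \eqref{laplace_su_cos} for the interference term. Your explicit justification of the noiseless (no $e^{-z}$ factor) variant via the Frullani identity and of the factorization $\mathcal{M}_{S,I}=\mathcal{M}_S\mathcal{M}_I$ by independence makes precise two steps the paper leaves implicit, but the argument is the same.
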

\begin{proof}
From Lemma \ref{lem_useful}, we have
\begin{align} \label{thm1_proof1}
&\mathbb{E}\left[\log_2 \left(1 + {\rm{SIR}}_{\rm MRT} \right) \right] \nonumber \\
&= \log_2 e \int_{0}^{\infty} \frac{1}{z} \left(1 - \mathcal{M}_{ S}\left(z\right) \right)\mathcal{M}_{ I}\left(z\right)  {\rm d} z,
\end{align}
where $\mathcal{M}_{ S}\left(z\right) = \mathbb{E}\left[e^{-z\left|{\bf{h}}_1^* {\bf{v}}_1 \right|^2} \right]$ and $\mathcal{M}_{ I}\left(z\right) = \mathbb{E}\left[e^{-z\left\| {\bf{d}}_1 \right\|^{\beta}\sum_{i=2}^{\infty}\left\| {\bf{d}}_i \right\|^{-\beta} \left| {\bf{h}}_{1,i}^*{\bf{V}}_i \right|^{2}} \right]$. 
Since $\mathcal{M}_{ S}\left(z\right)$ is obtained in Lemma \ref{lem_s_mgf_su}, we calculate $\mathcal{M}_{ I}\left(z\right)$. By following the same step as \eqref{laplace_su_cos}, we have
\begin{align}
&\mathbb{E}\left[e^{-z\left\| {\bf{d}}_1 \right\|^{\beta}\sum_{i=2}^{\infty}\left\| {\bf{d}}_i \right\|^{-\beta} \left| {\bf{h}}_{1,i}^*{\bf{V}}_i \right|^{2}} \right] \nonumber \\
&\mathop {=} \limits^{} \frac{\beta - 2}{\beta - 2 + 2z \cdot {}_2F_1\left(1, \frac{-2+\beta}{\beta}, 2- \frac{2}{\beta}, -z \right)}.
\end{align}
Plugging $\mathcal{M}_{ S}\left(z\right)$ and $\mathcal{M}_{ I}\left(z\right)$ into \eqref{thm1_proof1}, the proof is completed.
\end{proof}
When the number of feedback bits $B$, the number of antenna $N$, and the pathloss exponent $\beta$ are given, we are able to obtain the ergodic spectral efficiency for single-user MRT by calculating \eqref{erate_single} numerically. 
The verification of Theorem \ref{rate_sing} will be provided in Section V.

While Theorem \ref{rate_sing} provides the exact ergodic spectral efficiency, it is not easy to see the benefit of increasing $B$ to the ergodic spectral efficiency since the characterization is in an integral form. To resolve this, we provide a lower bound on the ergodic spectral efficiency in the following corollary.

\begin{corollary} \label{lower_sing_coro}
The ergodic spectral efficiency of single-user MRT is lower bounded by
\begin{align}
&\mathbb{E}\left[\log_2\left( 1+{\rm SIR}_{\rm MRT}\right) \right] \nonumber \\
&\ge \log_2\left(1+\left(1-2^{-\frac{B}{N-1}}\right)\frac{\exp\left(\psi(N)  \right)}{2/(\beta-2)} \right),
\end{align}
where $\psi(\cdot)$ is the digamma function defined as
\begin{align} \label{def_digamma}
\psi(x) = \int_{0}^{\infty}\left(\frac{e^{-t}}{t} - \frac{e^{-xt}}{1-e^{-t}}\right) {\rm d} t.
\end{align}
\end{corollary}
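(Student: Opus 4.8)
The plan is to pass to logarithms and exploit the convexity of the map $x \mapsto \log_2(1+e^x)$. Writing ${\rm SIR}_{\rm MRT} = S/I$ with $S = |{\bf{h}}_1^*\hat{\bf{h}}_1|^2$ and $I = \|{\bf{d}}_1\|^{\beta}\sum_{i\ge 2}\|{\bf{d}}_i\|^{-\beta}|{\bf{h}}_{1,i}^*{\bf{v}}_i|^2$ as in Theorem \ref{theo_sir_ccdf_su}, I would first apply Jensen's inequality to $\mathbb{E}[\log_2(1+e^{\ln {\rm SIR}_{\rm MRT}})]$, using that $\log_2(1+e^x)$ is increasing and convex, to obtain $\mathbb{E}[\log_2(1+{\rm SIR}_{\rm MRT})] \ge \log_2(1 + \exp(\mathbb{E}[\ln S] - \mathbb{E}[\ln I]))$. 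The problem then splits into a lower bound on $\mathbb{E}[\ln S]$ and an upper bound on $\mathbb{E}[\ln I]$, both of which feed monotonically into this outer function.

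For the signal term I would use the decomposition $S = \|{\bf{h}}_1\|^2(1-\sin^2\theta_1)$, where $\|{\bf{h}}_1\|^2$ and $\sin^2\theta_1$ are independent under SCVQ. Since $\|{\bf{h}}_1\|^2$ follows the Gamma law used in Lemma \ref{lem_s_mgf_su}, its log-expectation is the digamma value $\mathbb{E}[\ln\|{\bf{h}}_1\|^2] = \psi(N)$. For the quantization factor, the key observation is that the support of the SCVQ error CDF \eqref{sin_cdf} is $[0,\delta]$ with $\delta = 2^{-\frac{B}{N-1}}$, so $\sin^2\theta_1 \le 2^{-\frac{B}{N-1}}$ almost surely; hence $\ln(1-\sin^2\theta_1) \ge \ln(1-2^{-\frac{B}{N-1}})$ pointwise, giving $\mathbb{E}[\ln S] \ge \psi(N) + \ln(1-2^{-\frac{B}{N-1}})$. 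Note this deliberately discards the sharper factor in $\mathbb{E}[\sin^2\theta_1]=\frac{N-1}{N}2^{-\frac{B}{N-1}}$ in exchange for an almost-sure bound that survives the logarithm.

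For the interference term I would avoid evaluating $\mathbb{E}[\ln I]$ in closed form --- which would require integrating the hypergeometric Laplace transform \eqref{laplace_I_cos} --- and instead use concavity of $\ln$ to write $\mathbb{E}[\ln I] \le \ln\mathbb{E}[I]$. The quantity $\mathbb{E}[I]$ is computable directly: conditioning on $\|{\bf{d}}_1\|=r$, the interfering BSs form a PPP of intensity $\lambda$ outside the disk of radius $r$, so by Campbell's theorem together with $\mathbb{E}[|{\bf{h}}_{1,i}^*{\bf{v}}_i|^2]=1$ one gets $\mathbb{E}[I \mid r] = r^{\beta}\cdot 2\pi\lambda r^{2-\beta}/(\beta-2) = 2\pi\lambda r^2/(\beta-2)$; averaging over the nearest-BS distance density $2\pi\lambda r e^{-\pi\lambda r^2}$, for which $\mathbb{E}[r^2] = 1/(\pi\lambda)$, yields the clean, density-independent value $\mathbb{E}[I] = 2/(\beta-2)$. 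Thus $\exp(-\mathbb{E}[\ln I]) \ge (\beta-2)/2$.

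Assembling the three pieces inside the outer $\log_2(1+\cdot)$ gives exactly the claimed bound $\log_2\left(1 + \left(1-2^{-\frac{B}{N-1}}\right)\frac{\exp(\psi(N))}{2/(\beta-2)}\right)$. The main obstacle is not any single inequality but keeping the three estimates pointed in a mutually consistent direction: the outer Jensen step and the $\ln\|{\bf{h}}_1\|^2$ identity are exact, while the almost-sure bound on $\sin^2\theta_1$ (lower-bounding the signal) and the concavity bound on $\mathbb{E}[\ln I]$ (upper-bounding the log-interference) must both loosen the estimate so that the \emph{net} effect is a valid lower bound on the rate. The remaining details are verifying that $\|{\bf{h}}_1\|^2$ and $\sin^2\theta_1$ are genuinely independent and that $\mathbb{E}[I]$ is finite for $\beta>2$, which legitimizes the concavity step.
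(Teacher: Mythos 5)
Your proposal is correct and follows essentially the same route as the paper: the paper invokes a cited lemma giving $\mathbb{E}[\log_2(1+X/Y)] \ge \log_2\left(1+e^{\mathbb{E}[\ln X]}/\mathbb{E}[Y]\right)$, which is exactly the combination of your two Jensen steps (convexity of $x\mapsto\log_2(1+e^x)$ for the signal and concavity of $\ln$ for the interference), and it then evaluates $\mathbb{E}[\ln\|{\bf h}_1\|^2]=\psi(N)$ and $\mathbb{E}[I]=2/(\beta-2)$ just as you do. The only cosmetic difference is the quantization term: you use the almost-sure support bound $\sin^2\theta_1\le 2^{-\frac{B}{N-1}}$, whereas the paper computes $\mathbb{E}[\ln\cos^2\theta_1]$ exactly via an incomplete Beta function and then drops a nonnegative term --- both yield the same factor $1-2^{-\frac{B}{N-1}}$.
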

\begin{proof}
See Appendix \ref{proof:coro1}. 
\end{proof}

Corollary \ref{lower_sing_coro} reveals the effect of the feedback bits in a way that is easier to understand compared with Theorem \ref{rate_sing}. Intuitively, increasing the number of feedback bits $B$ improves the SIR term by $\left(1-2^{-\frac{B}{N-1}} \right)$ in the spectral efficiency \eqref{dfn_rate_su}.

\subsection{Lower Bound on the Optimum Number of Feedback Bits}
Now we derive a lower bound on the number of feedback bits $B^{\star}_{\rm MRT}$ that maximizes the net spectral efficiency defined in \eqref{net_per}. 

\begin{theorem} \label{theo_low_sing}
In single-user MRT, the number of feedback bits $B^{\star}_{\rm MRT}$ that maximizes the net spectral efficiency is lower bounded by
\begin{align} \label{lower_sing_claim}
B^{\star}_{\rm MRT} & \ge B^{\star}_{\rm L, MRT} \nonumber \\
&=  \left( N-1\right) \log_2\left(\frac{\left(\beta - 2\right) N + (\beta -2)T_{\rm c}}{\left(\beta - 2 \right) N + \beta} \right). 
\end{align}
\end{theorem}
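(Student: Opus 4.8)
The plan is to replace the intractable exact rate of Theorem~\ref{rate_sing} by the closed-form lower bound of Corollary~\ref{lower_sing_coro} inside the net-rate objective \eqref{net_per}, to optimize that surrogate exactly, and then to transfer the resulting optimizer back to the true problem as a lower bound. Writing $c=\tfrac{\beta-2}{2}\exp(\psi(N))$, I would define the surrogate net rate
\begin{align}
\underline{R}_{\rm Net}(B) = \log_2\!\left(1 + \left(1 - 2^{-\frac{B}{N-1}}\right)c\right) - \frac{B}{T_{\rm c}},
\end{align}
so that $R_{\rm Net}(B)\ge \underline{R}_{\rm Net}(B)$ for every $B$ by Corollary~\ref{lower_sing_coro}. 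First I would verify that $\underline{R}_{\rm Net}$ is strictly concave in $B$: the map $B\mapsto \log_2(1+c(1-2^{-B/(N-1)}))$ is increasing and concave (a direct computation gives a strictly negative second derivative), and subtracting the linear penalty $B/T_{\rm c}$ preserves concavity. Hence $\underline{R}_{\rm Net}$ has a unique maximizer, say $\tilde B$, characterized by its first-order condition.

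Next I would solve the stationarity condition explicitly. Substituting $v=2^{-B/(N-1)}$, differentiation yields $\underline{R}'_{\rm Net}(B)=\tfrac{1}{N-1}\tfrac{c\,v}{1+c-c\,v}-\tfrac{1}{T_{\rm c}}$, so setting this to zero gives the linear equation $c\,v\,(T_{\rm c}+N-1)=(N-1)(1+c)$, that is $v=\tfrac{(N-1)(1+c)}{c\,(T_{\rm c}+N-1)}$. Re-inserting $c=\tfrac{\beta-2}{2}\exp(\psi(N))$ and solving $-\tfrac{B}{N-1}=\log_2 v$ produces a closed form of exactly the advertised shape $B=(N-1)\log_2(\cdots)$. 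The remaining work is routine simplification together with elementary bounds on $\exp(\psi(N))$ (using $\psi(N)\le \ln N$, hence $\exp(\psi(N))\le N$), which lets me pass to the clean expression in \eqref{lower_sing_claim} while being careful to keep the \emph{direction} of each inequality so that the stated $B^{\star}_{\rm L,MRT}$ is at most the surrogate optimizer, i.e. $B^{\star}_{\rm L,MRT}\le \tilde B$.

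The last and most delicate step is to show $\tilde B\le B^{\star}_{\rm MRT}=\argmax_B R_{\rm Net}(B)$, so that $B^{\star}_{\rm L,MRT}\le \tilde B\le B^{\star}_{\rm MRT}$ chains to the claim; this is the difference between ``maximizer of a lower bound'' and ``lower bound on the maximizer.'' I would use a derivative-sign argument. Since $R_{\rm Net}$ is concave, it suffices that $R_{\rm Net}$ be nondecreasing at $\tilde B$, i.e. $R'_{\rm MRT}(B)\ge 1/T_{\rm c}$ there. Writing $R_{\rm MRT}=\underline{R}+\Delta$ with $\Delta=R_{\rm MRT}-\underline{R}\ge 0$ the (nonnegative) gap to the Corollary~\ref{lower_sing_coro} bound, stationarity of $\underline{R}_{\rm Net}$ at $\tilde B$ gives $R'_{\rm Net}(\tilde B)=\underline{R}'_{\rm Net}(\tilde B)+\Delta'(\tilde B)=\Delta'(\tilde B)$, so the claim reduces to $\Delta'(\tilde B)\ge 0$, namely that the gap between the exact rate and its bound does not decrease at the operating point.

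I expect \emph{this} to be the main obstacle, since it requires controlling the derivative of the exact integral rate of Theorem~\ref{rate_sing} rather than only its value. I would attack it in one of two ways: either differentiate \eqref{erate_single} under the integral sign in $B$ (the dependence enters only through $2^{-B/(N-1)}$ in the desired-signal factor) and bound the resulting integrand from below by $1/T_{\rm c}$ for $T_{\rm c}>100$; or exploit that $R_{\rm MRT}$ and $\underline{R}$ carry the same $(1-2^{-B/(N-1)})$ dependence and share the same large-$B$ limit, so that $\Delta$ is monotone over the relevant range. Either route closes the argument, and I regard verifying the sign of $\Delta'$ as the crux of the proof.
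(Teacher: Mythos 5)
Your overall strategy --- lower-bound the objective, optimize the surrogate, then transfer the optimizer --- is not the paper's route, and it leaves a genuine gap exactly where you flag it. The paper never bounds the rate itself: it writes the exact first-order condition $\partial R_{\rm MRT}/\partial B = 1/T_{\rm c}$, represents $\partial R_{\rm MRT}/\partial B$ as $2^{-\frac{B}{N-1}}\int_0^\infty \mathbb{E}[e^{-zI}]\,\mathbb{E}[e^{-zE}]\,\mathbb{E}[e^{-zG}]\,{\rm d}z$, applies Jensen's inequality to each Laplace-transform factor ($\mathbb{E}[e^{-zX}]\ge e^{-z\mathbb{E}[X]}$) to get a closed-form \emph{lower bound on the derivative}, solves $(\partial R/\partial B)_{\rm LB}=1/T_{\rm c}$ exactly, and then uses the monotone decrease of $\partial R/\partial B$ in $B$ to conclude that this solution sits below the true stationary point. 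That last monotonicity step is what converts ``lower bound on the feedback efficiency'' into ``lower bound on the optimizer,'' and it completely sidesteps the maximizer-of-a-lower-bound problem that your route runs into.

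The step you correctly identify as the crux --- $\Delta'(\tilde B)\ge 0$ where $\Delta = R_{\rm MRT}-\underline{R}$ --- is not established by anything in your argument, and neither of your two proposed attacks closes it. Pointwise domination $R_{\rm MRT}\ge\underline{R}$ carries no information about derivatives. Your fallback premise that the two functions ``share the same large-$B$ limit'' is false: as $B\to\infty$, $R_{\rm MRT}$ tends to the exact perfect-CSIT ergodic rate while $\underline{R}$ tends to $\log_2\bigl(1+\tfrac{\beta-2}{2}e^{\psi(N)}\bigr)$, which is strictly smaller (the Jensen gaps in Corollary \ref{lower_sing_coro} do not vanish), so $\Delta(\infty)>0$ and there is no reason for $\Delta$ to be monotone; indeed if $\Delta(0)>\Delta(\infty)$ then $\Delta'<0$ somewhere. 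Two smaller issues: (i) your surrogate optimizer $\tilde B=(N-1)\log_2\bigl(\tfrac{c(T_{\rm c}+N-1)}{(N-1)(1+c)}\bigr)$ with $c=\tfrac{\beta-2}{2}e^{\psi(N)}$ does not simplify to the stated $B^{\star}_{\rm L,MRT}$; and (ii) to show $B^{\star}_{\rm L,MRT}\le\tilde B$ you need a \emph{lower} bound on $e^{\psi(N)}$ (e.g.\ $e^{\psi(N)}\ge N-1$), whereas the bound you cite, $e^{\psi(N)}\le N$, points the wrong way. The fix is to imitate the paper: apply Jensen inside the integral representation of the \emph{derivative} of Theorem \ref{rate_sing} (the three factors of the integrand in \eqref{theo_low_sin2} are Laplace transforms of $I$, an ${\rm Exp}(1)$ variable, and a $\Gamma(N,1-2^{-B/(N-1)})$ variable), which yields $\partial R/\partial B \ge 2^{-\frac{B}{N-1}}\big/\bigl(\tfrac{2}{\beta-2}+1+N(1-2^{-\frac{B}{N-1}})\bigr)$ and, together with the monotonicity of $\partial R/\partial B$, gives \eqref{lower_sing_claim} directly.
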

\begin{proof} 
See Appendix \ref{proof:thm3}.
\end{proof}

The verification of the obtained analytical lower bound will be provided in Section V.

\begin{remark} \label{remark_qub}
\normalfont
In this paper, we use the SCVQ technique, which provides an upper bound performance over all channel codebooks \cite{1512149}. Conversely, the feedback bits obtained under the SCVQ assumption indicates a general lower bound on cases of applying other practical channel codebooks. 
The reason is that, when using a practical codebook, more feedback bits are required to have the same performance with the SCVQ assumption as the SCVQ assumption gives an upper bound performance. For this reason, the obtained lower bound $B_{\rm L, MRT}^{\star}$ serves as an actual lower bound for any quantization codebook.
\end{remark}

One important observation in Theorem \ref{theo_low_sing} is that the lower bound $B^{\star}_{\rm L, MRT}$ is not a function of $\lambda$. This agrees with the result of \cite{andrew:10}, which revealed that the SIR of the cellular network modeled by a homogenous PPP is independent to $\lambda$. Since the ergodic spectral efficiency is a function of the SIR, it is also independent to $\lambda$, which leads to the result that $B^{\star}_{\rm L,MRT}$ is independent to $\lambda$.

Next, we find an intuitive approximation of $B^{\star}_{\rm L,MRT}$ when $T_{\rm c} \gg 1$ in the following corollary.
\begin{corollary} \label{coro_sing_approx}
When $N, \beta$ are given and $T_{\rm c}$ is large enough, a lower bound on the optimum number of feedback bits, denoted as $B^{\star}_{\rm MRT}$, is approximately
\begin{align} 
B^{\star}_{\rm MRT} \ge B^{\star}_{\rm L, MRT} \approx (N-1)\log_2\left(T_{\rm c}\right).
\end{align}
\end{corollary}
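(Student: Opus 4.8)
The plan is to obtain the claimed approximation directly from the closed form in Theorem~\ref{theo_low_sing}, treating it as the leading-order asymptotics of $B_{\rm L, MRT}^{\star}$ in the regime $T_{\rm c}\gg 1$. The starting point is the exact expression
\begin{align}
B_{\rm L, MRT}^{\star}=(N-1)\log_2\!\left(\frac{(\beta-2)N+(\beta-2)T_{\rm c}}{(\beta-2)N+\beta}\right),
\end{align}
in which the numerator factors as $(\beta-2)(N+T_{\rm c})$ and the denominator $(\beta-2)N+\beta$ is a constant independent of $T_{\rm c}$.

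First I would separate the $T_{\rm c}$-dependence inside the logarithm by pulling out a factor of $T_{\rm c}$:
\begin{align}
B_{\rm L, MRT}^{\star}=(N-1)\log_2 T_{\rm c}+(N-1)\log_2\!\left(\frac{(\beta-2)\left(1+N/T_{\rm c}\right)}{(\beta-2)N+\beta}\right).
\end{align}
The first summand is precisely the claimed expression $(N-1)\log_2 T_{\rm c}$, so the entire argument reduces to showing that the second summand is asymptotically negligible relative to it.

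Next I would let $T_{\rm c}\to\infty$ and observe that $N/T_{\rm c}\to 0$, so the second logarithm converges to the finite constant $C(N,\beta):=\log_2\!\big((\beta-2)/((\beta-2)N+\beta)\big)$, which depends only on $N$ and $\beta$. Hence $B_{\rm L, MRT}^{\star}=(N-1)\log_2 T_{\rm c}+(N-1)C(N,\beta)+o(1)$; since the first term grows without bound while the remainder stays $O(1)$, the ratio $B_{\rm L, MRT}^{\star}/\big((N-1)\log_2 T_{\rm c}\big)\to 1$, which is exactly the meaning of the approximation $B_{\rm MRT}^{\star}\ge B_{\rm L, MRT}^{\star}\approx (N-1)\log_2 T_{\rm c}$. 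The inequality $B_{\rm MRT}^{\star}\ge B_{\rm L, MRT}^{\star}$ is simply inherited from Theorem~\ref{theo_low_sing} and requires no further work.

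The only real subtlety I expect is making precise what ``approximately'' means: the dropped constant $(N-1)C(N,\beta)$ is bounded but not zero (in fact negative, since $(\beta-2)/((\beta-2)N+\beta)<1$), so the statement is an asymptotic leading-order claim rather than an additive identity. I would therefore be explicit that the approximation holds in the sense that the neglected term is $O(1)$ and is dominated by the growing $\log_2 T_{\rm c}$ term once $T_{\rm c}\gg N$; the standing assumption $T_{\rm c}>100$ from Section~II guarantees that this is the operating regime of interest. No estimates beyond this are needed, since every simplification is an exact algebraic manipulation followed by a single limit.
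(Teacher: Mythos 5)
Your proposal is correct and follows essentially the same route as the paper's own proof: both start from the closed form in Theorem~\ref{theo_low_sing}, split off the $(N-1)\log_2 T_{\rm c}$ term, and discard the remaining $O(1)$ correction because it is dominated by the growing logarithm when $T_{\rm c}\gg 1$. Your version is marginally tidier in that you factor the numerator exactly and phrase the approximation as a ratio tending to $1$, whereas the paper performs the two approximations sequentially, but the substance is identical.
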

\begin{proof}
From Theorem \ref{theo_low_sing}, a lower bound $B_{\rm L}^{\star}$ is
\begin{align}
B^{\star}_{\rm MRT} \ge B^{\star}_{\rm L,MRT} =  \left( N-1\right) \log_2\left(\frac{\left(\beta - 2\right) N + (\beta -2)T_{\rm c}}{\left(\beta - 2 \right) N + \beta} \right).
\end{align}
When $T_{\rm c} \gg 1$, we have the following approximation.
\begin{align}
&\left( N-1\right) \log_2\left(\frac{\left(\beta - 2\right) N + (\beta -2)T_{\rm c}}{\left(\beta - 2 \right) N + \beta} \right) \nonumber \\
& \mathop {\approx} \limits^{(a)} \left( N-1\right) \log_2\left(\frac{(\beta -2)T_{\rm c}}{\left(\beta - 2 \right) N + \beta} \right) \nonumber \\
& = \left( N-1\right) \log_2\left(T_{\rm c} \right) + (N-1)\log_2\left(\frac{(\beta - 2)}{\left(\beta - 2\right) N + \beta} \right) \nonumber \\
&\mathop {\approx} \limits^{(b)} \left( N-1\right) \log_2\left(T_{\rm c} \right),
\end{align}
where (a) comes from $T_{\rm c} \gg (\beta-2)N$, and (b) comes from $\log_2\left(T_{\rm c}\right) \gg \log_2\left(\frac{(\beta-2)}{\left(\beta-2 \right)N+\beta} \right)$.
This completes the proof.
\end{proof}
From Corollary \ref{coro_sing_approx}, we observe that the optimum number of feedback bits scales linearly with the number of antennas and  scales logarithmically with the channel coherence time. 
As mentioned before, $B^{\star}_{\rm MRT}$ is not a function of instantaneous SIR, so it provides an appropriate channel codebook size which is not changed depending on short-term channel fading or even long-term pathloss.

\section{Multi-User Zero Forcing}

In this section, we characterize the CCDF of the instantaneous SIR and the ergodic spectral efficiency for multi-user ZF with the feedback bits. After that, as in previous section, we derive a lower bound on the optimum number of feedback bits $B^{\star}_{\rm ZF}$ that maximizes the net spectral efficiency.

\subsection{SIR CCDF Characterization}

Following the same steps with single-user MRT, we derive the CCDF of the instantaneous SIR for ZF. For better understanding, we first present ${\rm{SIR}}_{{\rm{ZF}}}^k$ \eqref{sir_mu} as
\begin{align} \label{sir_mu2} 
&{\rm{SIR}}_{{\rm{ZF}}}^k \nonumber \\
&=  \frac{P/K \left\| {\bf{d}}_1 \right\|^{-\beta} \left| {\bf{h}}_k^*  {\bf{v}}_k \right|^2 }{I_{\rm U}   + I_{\rm C} }
\nonumber \\
&= \frac{\frac{P}{K} \left| {\bf{h}}_k^*  {\bf{v}}_k \right|^2 }{\frac{P}{K} \sum_{ k' =1,k' \ne k}^{K}   \left| {\bf{h}}_{k}^* {\bf{v}}_{k'} \right|^2   +   \frac{P}{K}\left\| {\bf{d}}_1 \right\|^{\beta}  \sum_{i=2}^{\infty}\left\| {\bf{d}}_i \right\|^{-\beta} \left\| {\bf{h}}_{k,i}^*{\bf{V}}_i \right\|^{2} } \nonumber \\
& \mathop = \limits^{(a)} \frac{\left\| {\bf{h}}_k \right\|^2 \beta\left(1, N-1\right) }{\Bigg(\begin{array}{c} \left\| {\bf{h}}_k \right\|^2 {\rm sin}^2 \theta_k  \sum_{ k' =1,k' \ne k}^{K} \beta\left(1, N-2 \right)   \\
+   \left\| {\bf{d}}_1 \right\|^{\beta}  \sum_{i=2}^{\infty}\left\| {\bf{d}}_i \right\|^{-\beta} \left\| {\bf{h}}_{k,i}^*{\bf{V}}_i \right\|^{2} \end{array} \Bigg) }, 
\end{align}
where (a) follows (21) in \cite{4299617}. In the above, $\beta\left(1, M \right)$ is a Beta random variable that follows ${\rm Beta} \left(1, M \right)$. 
Now, Lemma \ref{lem_beta} is presented for the distribution of a product of a Gamma random variable and a Beta random variable.

\begin{lemma} \label{lem_beta}
Let $G$ and $B$ be random variables that follow the Gamma distribution $\Gamma\left( M, \theta\right)$ and the Beta distribution ${\rm{Beta}}\left(1, M-1\right)$, respectively. Then,  $H = GB$ is an exponential random variable with mean $\lambda = 1/\theta$.
\end{lemma}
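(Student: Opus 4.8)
The plan is to prove the claim by showing that $H=GB$ has the Laplace transform of an exponential random variable of rate $\theta$, and then to invoke uniqueness of the Laplace transform. Throughout I take $G$ and $B$ to be independent (as they are in the intended application of \eqref{sir_mu2}, where $G=\left\| {\bf h}_k \right\|^2$ is the channel norm and $B$ is an independent projection/quantization quantity), and I fix the rate parameterization so that $G\sim\Gamma(M,\theta)$ has density $\frac{\theta^M}{\Gamma(M)}g^{M-1}e^{-\theta g}$ and Laplace transform $(1+s/\theta)^{-M}$, while ${\rm Beta}(1,M-1)$ has density $f_B(b)=(M-1)(1-b)^{M-2}$ on $[0,1]$. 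The target is $\mathcal{L}_H(s)=\theta/(\theta+s)$, which is the Laplace transform of an exponential with rate $\theta$, i.e. mean $1/\theta$.

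First I would condition on $B=b$: since $bG\sim\Gamma(M,\theta/b)$, its conditional Laplace transform is $\mathbb{E}\left[e^{-sbG}\mid B=b\right]=(1+sb/\theta)^{-M}$, and averaging over $B$ gives
\begin{align}
\mathcal{L}_H(s)=(M-1)\int_0^1 (1+sb/\theta)^{-M}(1-b)^{M-2}\,{\rm d}b.
\end{align}
Then I would substitute $v=1-b$ and write $c=s/\theta$, producing $(M-1)(1+c)^{-M}\int_0^1 v^{M-2}(1-\rho v)^{-M}\,{\rm d}v$ with $\rho=c/(1+c)$. I would evaluate this last integral by recognizing it as a Gauss-hypergeometric (Euler-type) integral, which yields $\tfrac{1}{M-1}\,{}_2F_1(M,M-1;M;\rho)$, and then apply the reduction ${}_2F_1(\alpha,c;c;z)=(1-z)^{-\alpha}$ (after using the symmetry of the two upper parameters so that the lower parameter $M$ matches one of them), obtaining $\tfrac{1}{M-1}(1-\rho)^{-(M-1)}$. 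Since $1-\rho=1/(1+c)$, the factors collapse:
\begin{align}
\mathcal{L}_H(s)=(1+c)^{-M}(1+c)^{M-1}=(1+c)^{-1}=\frac{\theta}{\theta+s},
\end{align}
which is exactly the exponential Laplace transform with mean $1/\theta$; uniqueness then finishes the proof.

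A cleaner conceptual route, which I would mention as an alternative and which also explains \emph{why} the shapes match as $1+(M-1)=M$, is the beta-gamma algebra: take independent $X\sim\Gamma(1,\theta)$ and $Y\sim\Gamma(M-1,\theta)$; then $X+Y\sim\Gamma(M,\theta)$ and $X/(X+Y)\sim{\rm Beta}(1,M-1)$ are independent, so one may realize $G=X+Y$ and $B=X/(X+Y)$, giving $H=GB=X\sim\Gamma(1,\theta)$, an exponential with mean $1/\theta$. The main obstacle in the direct approach is purely the integral evaluation in the middle step, namely getting the hypergeometric identification and the ${}_2F_1$ reduction to cancel correctly; the main subtlety in the algebraic route is instead justifying that the specific pair $(G,B)$ arising in the application has exactly the joint law of $\left(X+Y,\,X/(X+Y)\right)$, i.e. that $G$ and $B$ are independent with the stated marginals. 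Once independence and the marginals are granted, both routes are routine.
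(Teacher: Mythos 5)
You have proved the lemma correctly by two independent routes, whereas the paper itself does not prove it at all: its ``proof'' is a one-line citation to Theorem 1 of \cite{prodgammabeta}, so your argument supplies what the authors outsourced. The direct computation checks out: conditioning on $B=b$ gives $(1+sb/\theta)^{-M}$, the Euler integral matches as $\int_0^1 v^{M-2}(1-\rho v)^{-M}\,{\rm d}v=\frac{1}{M-1}\,{}_2F_1(M,M-1;M;\rho)$, and the degenerate reduction ${}_2F_1(M-1,M;M;\rho)=(1-\rho)^{-(M-1)}$ collapses the product to $(1+s/\theta)^{-1}=\theta/(\theta+s)$, the transform of an exponential with mean $1/\theta$. (The hypergeometric detour is avoidable if you prefer: $\frac{{\rm d}}{{\rm d}v}\left[\frac{v^{M-1}}{(M-1)(1-\rho v)^{M-1}}\right]=\frac{v^{M-2}}{(1-\rho v)^{M}}$, so the inner integral is elementary.) The beta--gamma algebra route is cleaner still and is essentially the content of the theorem the paper cites. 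Two points you are right to make explicit. First, the lemma as printed omits the hypothesis that $G$ and $B$ are independent; this is indispensable (a comonotone coupling of the same marginals yields a different product law), and it is exactly what must be checked in the application to \eqref{sir_mu2}, where it comes from Lemma 1 of \cite{4299617}. Second, your rate parameterization is the one consistent with the stated conclusion ``mean $1/\theta$''; note, though, that the paper's own later invocation (declaring the IUI product to be $\Gamma(1,\delta)$ with mean $\delta=2^{-B/(N-1)}$ in the proof of Theorem \ref{theo_sir_ccdf_mu}) silently switches to the scale convention, so there the lemma is really being used in the form ``$GB$ is exponential with mean equal to the scale parameter.''
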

\begin{proof}
See Theorem 1 in \cite{prodgammabeta}.
\end{proof}

Leveraging Lemma \ref{lem_beta}, Theorem \ref{theo_sir_ccdf_mu} gives the CCDF of the instantaneous SIR of multi-user ZF. One should note that we assume independence between the desired signal and the IUI terms for analytical tractability. This assumption was also used in \cite{6205588} and shown to be reasonable.
The verification provided in the next section also shows that the results under this assumption matches well with the simulation results.

\begin{theorem} \label{theo_sir_ccdf_mu}
The CCDF of the instantaneous SIR for multi-user ZF with $B$ bits feedback is
\begin{align}
&\mathbb{P}\left[{\rm SIR}_{\rm ZF}^k > \gamma \right] \nonumber \\
&= \left( \frac{1}{1+\gamma  2^{-\frac{B}{N-1}}} \right)^{N-1} \frac{1}{{}_2F_1\left(N, -\frac{2}{\beta}, 1-\frac{2}{\beta}, -\gamma \right)}.
\end{align}
\end{theorem}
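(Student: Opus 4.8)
The plan is to turn the CCDF into a product of two Laplace transforms---one for the inter-user interference $I_{\rm U}$ and one for the inter-cell interference $I_{\rm C}$---and to evaluate each separately. First I would note that, in the rewriting \eqref{sir_mu2}, the desired-signal numerator $\left\|\mathbf{h}_k\right\|^2\,\mathrm{Beta}(1,N-1)$ is a product of $\left\|\mathbf{h}_k\right\|^2\sim\Gamma(N,1)$ and an independent $\mathrm{Beta}(1,N-1)$, so Lemma \ref{lem_beta} (with $M=N$, $\theta=1$) makes it an exponential random variable $S$ of unit mean. Invoking the stated independence between the desired signal and the interference and using $\mathbb{P}[S>\gamma(I_{\rm U}+I_{\rm C})\mid I_{\rm U},I_{\rm C}]=e^{-\gamma(I_{\rm U}+I_{\rm C})}$, I obtain
\begin{align}
\mathbb{P}\left[{\rm SIR}_{\rm ZF}^k>\gamma\right]=\mathbb{E}\left[e^{-\gamma I_{\rm U}}\right]\,\mathbb{E}\left[e^{-\gamma I_{\rm C}}\right],
\end{align}
so that the two factors of the claim are exactly the Laplace transforms of $I_{\rm U}$ and $I_{\rm C}$ evaluated at $s=\gamma$.

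For $\mathbb{E}[e^{-\gamma I_{\rm U}}]$ I would first read off from \eqref{sin_cdf} that $\sin^2\theta_k\overset{d}{=}\delta\,V$ with $\delta=2^{-B/(N-1)}$ and $V\sim\mathrm{Beta}(N-1,1)$, since $F_{\sin^2\theta_k}(x)=(x/\delta)^{N-1}$ on $[0,\delta]$. The Gamma--Beta algebra then collapses $\left\|\mathbf{h}_k\right\|^2\sin^2\theta_k=\delta\cdot\Gamma(N,1)\cdot\mathrm{Beta}(N-1,1)$ to $\delta\cdot\Gamma(N-1,1)$, and combining this with one of the $\mathrm{Beta}(1,N-2)$ factors of \eqref{sir_mu2} through Lemma \ref{lem_beta} (now with $M=N-1$) shows that each of the $N-1$ interference terms is exponential with mean $\delta$. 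Treating these $N-1$ terms as independent---the approximation stated before the theorem---gives $I_{\rm U}\sim\Gamma(N-1,\delta)$, whose Laplace transform is $\mathbb{E}[e^{-\gamma I_{\rm U}}]=(1+\gamma\delta)^{-(N-1)}$, matching the first factor. Equivalently, after conditioning on $\sin^2\theta_k$ the claim reduces to the elementary identity $\int_0^1(N-1)v^{N-2}(1+\gamma\delta v)^{-N}\,\mathrm{d}v=(1+\gamma\delta)^{-(N-1)}$, verified at once by recognizing $v^{N-1}(1+\gamma\delta v)^{-(N-1)}$ as an antiderivative of the integrand.

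For $\mathbb{E}[e^{-\gamma I_{\rm C}}]$ I would repeat the argument used for Theorem \ref{theo_sir_ccdf_su}, the only change being that the per-interferer gain $\left\|\mathbf{h}_{k,i}^*\mathbf{V}_i\right\|^2$ is now a sum of $N$ unit-mean exponentials, i.e.\ effectively $\Gamma(N,1)$, because $\mathbf{V}_i$ carries $K=N$ columns; this replaces the per-interferer Laplace factor by $(1+s)^{-N}$ and is precisely what raises the first hypergeometric parameter to $N$. Conditioning on the nearest-BS distance $\left\|\mathbf{d}_1\right\|=r_1$, applying the probability generating functional of the PPP over the remaining BSs (which lie outside radius $r_1$ since $\mathbf{d}_1$ is nearest), and averaging over the nearest-neighbour density $2\pi\lambda r_1 e^{-\pi\lambda r_1^2}$, the substitution $u=(r_1/r)^{\beta}$ cancels both $\lambda$ and $r_1$ and collapses the double integral into $1/{}_2F_1(N,-2/\beta,1-2/\beta,-\gamma)$. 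Note that this factor carries no quantization dependence, consistent with all of the $B$-dependence residing in the first factor. Multiplying the two factors completes the proof.

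I expect the crux to lie in the $I_{\rm U}$ factor rather than the $I_{\rm C}$ factor, and specifically in the two places where genuine dependence is traded for tractability: decoupling $S$ from $I_{\rm U}$ (both truly share $\left\|\mathbf{h}_k\right\|^2$ and $\sin^2\theta_k$), and treating the $N-1$ quantization-error interference terms as independent exponentials despite their common channel factor, which is also where the non-orthogonality of the ZF beamformers is absorbed. Once these approximations are granted, the distributional reductions via Lemma \ref{lem_beta} and the PPP computation are routine.
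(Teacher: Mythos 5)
Your proposal is correct and follows essentially the same route as the paper's proof: reduce the desired signal to a unit-mean exponential via Lemma \ref{lem_beta}, factor the CCDF into the Laplace transforms $\mathcal{L}_{I_{\rm U}}(\gamma)$ and $\mathcal{L}_{I_{\rm C}}(\gamma)$ under the stated independence assumptions, identify each IUI term as exponential with mean $2^{-B/(N-1)}$ through the quantization-cell distribution and the Gamma--Beta algebra, and evaluate the ICI factor by the PGFL of the PPP averaged over the nearest-BS distance. Your added remarks on where dependence is traded for tractability and the antiderivative check are consistent with, though not required by, the paper's argument.
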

\begin{proof}
See Appendix \ref{proof:thm4}.
\end{proof}
In Theorem \ref{theo_sir_ccdf_mu}, it is clear that the CCDF of the instantaneous SIR of multi-user ZF consists of two separate terms, representing the Laplace transforms of IUI and ICI, respectively. 

\subsection{Ergodic Spectral Efficiency Characterization}

\begin{theorem} \label{mu_rate}
The ergodic spectral efficiency of the multi-user ZF transmission with $B$ bits feedback is
\begin{align}
&\mathbb{E}\left[\log_2 \left(1 + {\rm{SIR}}^k_{\rm MU} \right) \right]  \nonumber\\
&= \log_2 e\int_{0}^{\infty} \left( \frac{1}{1+z}\right) \left( \frac{1}{1+z 2^{-\frac{B}{N-1}}}  \right)^{N-1} \cdot \nonumber \\
&\;\;\;\;\;\;\;\;\;\;\;\;\;\;\;\;\;\;\;\;\;\;\;\;\;\;\;\; \left(\frac{1}{{}_2F_1\left(N, -\frac{2}{\beta}, 1-\frac{2}{\beta}, -z \right)} \right){\rm d} z.
\end{align}
\end{theorem}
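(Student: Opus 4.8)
The plan is to build directly on the closed-form CCDF already established in Theorem~\ref{theo_sir_ccdf_mu}, rather than re-deriving everything from the signal model. The key tool is the elementary identity that for any non-negative random variable $X$,
\begin{align} \label{eq:proposal_ccdf_id}
\mathbb{E}\left[\ln\left(1+X\right)\right] = \int_{0}^{\infty} \frac{\mathbb{P}\left[X > t\right]}{1+t}\, {\rm d}t ,
\end{align}
which rewrites an ergodic-rate expectation as an integral of the complementary CDF. Applying \eqref{eq:proposal_ccdf_id} with $X = {\rm SIR}_{\rm ZF}^k$ and accounting for the change of logarithm base through the factor $\log_2 e$, the ergodic spectral efficiency becomes $\log_2 e \int_{0}^{\infty} (1+z)^{-1}\,\mathbb{P}[{\rm SIR}_{\rm ZF}^k > z]\,{\rm d}z$. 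Substituting the CCDF from Theorem~\ref{theo_sir_ccdf_mu} into this expression and relabeling the dummy variable as $z$ reproduces the claimed integral term by term.

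First I would justify \eqref{eq:proposal_ccdf_id}: writing $\ln(1+X) = \int_{0}^{X} (1+t)^{-1}\,{\rm d}t = \int_{0}^{\infty} (1+t)^{-1}\,\mathbf{1}[t<X]\,{\rm d}t$ and interchanging expectation and integration by Tonelli's theorem, which is permitted because the integrand is non-negative, the inner expectation of the indicator equals $\mathbb{P}[X>t]$. Next I would verify integrability of the resulting integrand: while $(1+z)^{-1}$ alone is not integrable over $[0,\infty)$, the CCDF carries the extra factors $(1+z\,2^{-B/(N-1)})^{-(N-1)}$ and $1/{}_2F_1(N,-\tfrac{2}{\beta},1-\tfrac{2}{\beta},-z)$, whose combined polynomial decay in $z$ makes the product integrable, so \eqref{eq:proposal_ccdf_id} yields a finite value. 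With these two points in hand the derivation reduces to a mechanical substitution.

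Because Theorem~\ref{theo_sir_ccdf_mu} already performs the substantive analytical work—using Lemma~\ref{lem_beta} to identify the desired channel gain as an exponential random variable with unit mean, and invoking the stated independence approximation between the desired signal and the IUI to factor the Laplace transforms of IUI and ICI—there is essentially no remaining obstacle here beyond the substitution itself. The only mild care needed is the Tonelli interchange and the variable relabeling; the conceptual content of the result lives in Theorem~\ref{theo_sir_ccdf_mu}, not in this step.

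Finally, I note that this route is fully consistent with the Frullani-type argument underlying Lemma~\ref{lem_useful} that was used for single-user MRT in Theorem~\ref{rate_sing}. Since the desired signal is exponential with unit mean and (under the independence approximation) independent of the total interference $I_{\rm U}+I_{\rm C}$, one has $\mathbb{P}[{\rm SIR}_{\rm ZF}^k > z] = \mathbb{E}[e^{-z(I_{\rm U}+I_{\rm C})}]$, so the CCDF route and the Laplace-transform route collapse to the same integrand $\mathbb{E}[e^{-z(I_{\rm U}+I_{\rm C})}]/(1+z)$. I would present the CCDF form of \eqref{eq:proposal_ccdf_id} as the primary argument, as it reuses Theorem~\ref{theo_sir_ccdf_mu} verbatim and matches the target expression without further algebra.
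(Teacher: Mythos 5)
Your proposal is correct, and it reaches the stated integral by a route that is genuinely different in presentation from the paper's, though the two collapse to the same computation. The paper proves this theorem by invoking Lemma~\ref{lem_useful} to write the rate as $\log_2 e\int_0^\infty z^{-1}(1-\mathcal{M}_S(z))\mathcal{M}_I(z)\,{\rm d}z$, then uses Lemma~\ref{lem_beta} to get $\mathcal{M}_S(z)=1/(1+z)$ (so that $z^{-1}(1-\mathcal{M}_S(z))=1/(1+z)$) and substitutes the Laplace transforms $\mathcal{L}_{I_{\rm U}}$ and $\mathcal{L}_{I_{\rm C}}$ already computed in \eqref{laplace_iui_sir_ccdf_final} and \eqref{laplace_ici_sir_ccdf_final}. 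You instead use the layer-cake identity $\mathbb{E}[\ln(1+X)]=\int_0^\infty(1+t)^{-1}\mathbb{P}[X>t]\,{\rm d}t$ and plug in Theorem~\ref{theo_sir_ccdf_mu} verbatim; since $\mathbb{P}[{\rm SIR}_{\rm ZF}^k>z]=\mathcal{M}_I(z)$ by the exponentiality of the desired gain and the independence approximation, the integrands coincide, exactly as you observe in your closing paragraph. Your route is slightly more economical here because it reuses the finished CCDF rather than its ingredients, and your Tonelli and integrability remarks are the right ones to make. It is worth noting why the paper does not take your shortcut: the same Lemma~\ref{lem_useful} template is used for single-user MRT in Theorem~\ref{rate_sing}, where the desired gain $\|{\bf h}_1\|^2\cos^2\theta_1$ is \emph{not} exponential, so $\mathbb{P}[{\rm SIR}>z]$ does not reduce to a Laplace transform of the interference and the CCDF substitution would not produce the analogous clean integrand; the paper's uniform use of Lemma~\ref{lem_useful} buys a single proof pattern for both cases, while yours buys brevity in the ZF case specifically. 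Both arguments inherit the same caveat, which you correctly flag: the equality holds under the assumed independence between the desired signal and the IUI terms baked into Theorem~\ref{theo_sir_ccdf_mu}.
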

\begin{proof}
The proof is similar to that of the single-user case. We start from Lemma \ref{lem_useful}.
\begin{align} 
&\mathbb{E} \left[\log_2\left(1 + {\rm{SIR}}_{\rm MU}^{k} \right) \right] \nonumber \\
&= \log_2 e \int_{0}^{\infty} \frac{1}{z} \left(1 - \mathcal{M}_{ S}\left(z\right) \right)\mathcal{M}_{ I}\left(z\right)  {\rm d} z,
\end{align}
where $\mathcal{M}_{ S}\left(z\right) = \mathbb{E}\left[e^{-z\left|{\bf{h}}_k^* {\bf{v}}_k \right|^2} \right]$ and 
\begin{align}
&\mathcal{M}_{ I}\left(z\right) \nonumber \\
&= \mathbb{E}\left[e^{-z\left(I_{\rm U} + I_{\rm C}\right) } \right]\nonumber\\
&= \mathbb{E}\left[e^{-z\left( \sum_{ k' =1,k' \ne k}^{K}   \left| {\bf{h}}_{k}^* {\bf{v}}_{k'} \right|^2   +   \left\| {\bf{d}}_1 \right\|^{\beta}  \sum_{i=2}^{\infty}\left\| {\bf{d}}_i \right\|^{-\beta} \left\| {\bf{h}}_{k,i}^*{\bf{V}}_i \right\|^{2} \right)} \right] \nonumber \\
&= \mathcal{L}_{I_{\rm U}}(z) \mathcal{L}_{I_{\rm C}}(z).
\end{align}
As revealed in Lemma \ref{lem_beta}, $\left| {\bf{h}}_k^* {\bf{v}}_k \right|^2  = \left\| {\bf{h}}_k \right\|^2 \beta\left(1, N-1\right)  \sim \Gamma \left(1,1 \right)$ so we have the Laplace transform $\mathcal{M}_{ S}\left(z\right) = 1/\left(1+z\right)$. Plugging the Laplace transform of the IUI and ICI obtained in \eqref{laplace_iui_sir_ccdf_final} and \eqref{laplace_ici_sir_ccdf_final} into $\mathcal{M}_{ I}\left(z\right)$, we complete the proof.
\end{proof}

The verification for Theorem \ref{mu_rate} will be provided in Section V.
We also provide a lower bound on the ergodic spectral efficiency of ZF in the following corollary.

\begin{corollary} \label{rate_mu_lower}
The ergodic spectral efficiency of the multi-user ZF with $B$ bits feedback is lower bounded by
\begin{align}
\mathbb{E}\left[\log_2\left( 1+{\rm SIR}_{\rm ZF}^k\right) \right] \ge \log_2\left(1+\frac{\exp(\psi(1))}{(N-1)2^{-\frac{B}{N-1}} + \frac{2N}{\beta-2}} \right).
\end{align}
\end{corollary}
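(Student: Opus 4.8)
The plan is to bound the rate below by combining Jensen's inequality with exact moment computations, exactly mirroring the argument behind Corollary~\ref{lower_sing_coro}. Writing $\mathbb{E}\left[\log_2\left(1+{\rm SIR}_{\rm ZF}^k\right)\right] = \mathbb{E}\left[\log_2\left(1+e^{Y}\right)\right]$ with $Y = \ln {\rm SIR}_{\rm ZF}^k$, I would first invoke the convexity of $y \mapsto \log_2\left(1+e^{y}\right)$ to apply Jensen's inequality, giving $\mathbb{E}\left[\log_2\left(1+e^{Y}\right)\right] \ge \log_2\left(1+e^{\mathbb{E}[Y]}\right)$. Since $\log_2\left(1+e^{y}\right)$ is increasing, it then suffices to produce a lower bound on $\mathbb{E}[Y] = \mathbb{E}\left[\ln {\rm SIR}_{\rm ZF}^k\right]$ and substitute it into the right-hand side.

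Next I would split $\mathbb{E}\left[\ln {\rm SIR}_{\rm ZF}^k\right] = \mathbb{E}\left[\ln(\text{numerator})\right] - \mathbb{E}\left[\ln(\text{denominator})\right]$ using the representation of ${\rm SIR}_{\rm ZF}^k$ in \eqref{sir_mu2}; note this split needs no independence between numerator and denominator, only linearity of expectation applied to the logarithm. For the numerator, Lemma~\ref{lem_beta} shows the effective desired gain $\left\|{\bf{h}}_k\right\|^2 \beta(1,N-1)$ is an exponential random variable with unit mean, and the logarithmic moment of a unit-mean exponential is exactly $\mathbb{E}[\ln X] = \psi(1)$ (the negative Euler--Mascheroni constant), so this term is computed in closed form. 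For the denominator I would instead use the concavity of $\ln(\cdot)$ to apply Jensen's inequality in the opposite direction, $\mathbb{E}\left[\ln(\text{denominator})\right] \le \ln\mathbb{E}\left[\text{denominator}\right]$; because this term is subtracted, the upper bound produces a valid lower bound on $\mathbb{E}[Y]$. Chaining these gives $\mathbb{E}\left[\ln {\rm SIR}_{\rm ZF}^k\right] \ge \psi(1) - \ln\mathbb{E}\left[\text{denominator}\right]$, and hence the claimed form with $\exp(\psi(1))$ in the numerator.

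It remains to evaluate $\mathbb{E}\left[\text{denominator}\right]$, the sum of the expected IUI and ICI terms appearing in \eqref{sir_mu2}. For the IUI part I would exploit the mutual independence (under the SCVQ decomposition of \cite{4299617}) of the three factors, using $\mathbb{E}\left[\left\|{\bf{h}}_k\right\|^2\right] = N$, the mean SCVQ quantization error obtained from \eqref{sin_cdf}, namely $\mathbb{E}\left[\sin^2\theta_k\right] = \frac{N-1}{N}2^{-\frac{B}{N-1}}$, and $\mathbb{E}\left[\beta(1,N-2)\right] = \frac{1}{N-1}$ summed over the $K-1 = N-1$ interfering streams; the product collapses to $(N-1)2^{-\frac{B}{N-1}}$. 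For the ICI part I would condition on the nearest-BS distance $\left\|{\bf{d}}_1\right\| = r$, apply Campbell's theorem to the interfering PPP restricted to $\{t > r\}$, use $\mathbb{E}\left[\left\|{\bf{h}}_{k,i}^*{\bf{V}}_i\right\|^2\right] = N$, and average over the nearest-neighbour distance with $\mathbb{E}[r^2] = \frac{1}{\pi\lambda}$; the density $\lambda$ cancels and leaves $\frac{2N}{\beta-2}$. Summing the two contributions yields the stated denominator $(N-1)2^{-\frac{B}{N-1}} + \frac{2N}{\beta-2}$.

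The main obstacle is keeping the two Jensen steps pointing in compatible directions so that the composite chain is genuinely a lower bound, and justifying that the three factors of the IUI term are mutually independent so that replacing their joint mean by the product of marginal means is legitimate. The ICI moment is the most computational piece, but it is the standard nearest-BS stochastic-geometry calculation and poses no conceptual difficulty.
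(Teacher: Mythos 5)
Your proposal is correct and follows essentially the same route as the paper: the two-step Jensen chain you construct (convexity of $y\mapsto\log_2(1+e^y)$ for the outer bound, concavity of $\ln$ for the denominator) is precisely the content of the inequality \eqref{ny_lowerbound} that the paper imports as a black box from \cite{lee:spectral}, and your moment computations ($\psi(1)$ for the unit-mean exponential numerator, $(N-1)2^{-\frac{B}{N-1}}$ for the IUI, $\frac{2N}{\beta-2}$ for the ICI via Campbell's theorem and $\mathbb{E}[r^2]=\frac{1}{\pi\lambda}$) all match the paper's. The only cosmetic difference is that the paper evaluates the IUI mean directly from the fact that $\left\|{\bf{h}}_k\right\|^2\sin^2\theta_k\,\beta(1,N-2)$ is exponential with mean $2^{-\frac{B}{N-1}}$ (Lemma \ref{lem_beta} with Lemma 1 of \cite{4299617}), which sidesteps the independence-of-three-factors issue you flag.
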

\begin{proof}
By using the lower bound \eqref{ny_lowerbound}, we have
\begin{align} \label{coro_mu_lower}
&\mathbb{E}\left[\log_2\left(1+ {\rm SIR}_{\rm ZF}^{k} \right) \right] \nonumber \\
& \ge\log_2\Bigg(1+ \frac{ \overbrace{\exp\left( \mathbb{E}\left[ \ln \left(\left\| {\bf{h}}_k \right\|^2 \beta\left(1, N-1\right) \right) \right] \right)}^{(a)}}{\left(\begin{array}{c} \underbrace{ \mathbb{E}\left[ \left\| {\bf{h}}_k \right\|^2 {\rm sin}^2 \theta_k  \sum_{ k' =1,k' \ne k}^{K} \beta\left(1, N-2 \right)\right]}_{(b)}   \\
+  \underbrace{\mathbb{E} \left[ \left\| {\bf{d}}_1 \right\|^{\beta}  \sum_{i=2}^{\infty}\left\| {\bf{d}}_i \right\|^{-\beta} \left\| {\bf{h}}_{k,i}^*{\bf{V}}_i \right\|^{2}\right]}_{(c)} \end{array} \right)} \Bigg),
\end{align}
By leveraging the distribution obtained in Lemma \ref{lem_beta}, expectations in the log function are calculated as 
${\rm (a)} = \exp\left(\psi(1) \right)$,
${\rm (b)} = (N-1)2^{-\frac{B}{N-1}}$, and ${\rm (c)} = \frac{2N}{\beta-2}$.
Combining the calculation results completes the proof.
\end{proof}
In Corollary \ref{rate_mu_lower}, it is well observed that increasing the feedback bits $B$ boosts the ergodic spectral efficiency by mitigating the IUI in inverse power scale, i.e., $2^{-\frac{B}{N-1}}$.

\subsection{Approximate Lower Bound on the Optimum Number of Feedback Bits}
In this subsection, we derive an approximate lower bound on the optimum number of feedback bits that works for a particular region of $T_{\rm c}$, specifically $T_{\rm c} \gg 1$. Later, we numerically show that the gap between the derived approximation and the numerically obtained optimum feedback bits is tight in our interest range of $T_{\rm c}$.
Before the derivation, we first provide the following lemma for obtaining a lower bound on the Laplace transform of the ICI.

\begin{lemma} \label{lem_lower_MGF_ICI}
The Laplace transform of the ICI is lower bounded as follows.
\begin{align}
\mathcal{L}_{I_{\rm C}}(s) &= \frac{1}{{}_2F_1\left(N, -\frac{2}{\beta}, 1-\frac{2}{\beta}, -s \right)}   \nonumber \\
&\ge \frac{1}{1+s^{\frac{2}{\beta}}\frac{2}{\beta}N^{\frac{2}{\beta}}\left(-\Gamma(-\frac{2}{\beta}) \right)}.
\end{align}
\end{lemma}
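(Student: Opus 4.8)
The plan is to pass to reciprocals and reduce the claim to an \emph{upper} bound on the Gauss hypergeometric function. Since both ${}_2F_1(N,-\tfrac{2}{\beta},1-\tfrac{2}{\beta},-s)$ and the quantity $1+s^{2/\beta}\tfrac{2}{\beta}N^{2/\beta}(-\Gamma(-\tfrac{2}{\beta}))$ are strictly positive for $s\ge 0$ and $\beta>2$, the stated inequality is equivalent to
\begin{align*}
{}_2F_1\!\left(N,-\tfrac{2}{\beta},1-\tfrac{2}{\beta},-s\right)\;\le\;1+s^{2/\beta}\,\tfrac{2}{\beta}\,N^{2/\beta}\left(-\Gamma(-\tfrac{2}{\beta})\right).
\end{align*}
Throughout I write $\alpha=2/\beta\in(0,1)$ for brevity, and the idea is to produce a \emph{clean} closed-form majorant for the left-hand side rather than merely a finite one.

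The first key step is to convert the hypergeometric function into a single nonnegative integral. Using $(-\alpha)_n/(1-\alpha)_n=-\alpha/(n-\alpha)$ together with $1/(n-\alpha)=\int_0^1 t^{\,n-\alpha-1}\,\mathrm{d}t$ in the defining series $\sum_n \tfrac{(N)_n}{n!}\tfrac{(-\alpha)_n}{(1-\alpha)_n}(-s)^n$, and summing the resulting series via $\sum_{n\ge 0}\tfrac{(N)_n}{n!}x^n=(1-x)^{-N}$, I obtain
\begin{align*}
{}_2F_1\!\left(N,-\alpha,1-\alpha,-s\right)=1+\alpha\int_0^1\left(1-(1+st)^{-N}\right)t^{-\alpha-1}\,\mathrm{d}t.
\end{align*}
The same representation also falls directly out of the Laplace-transform derivation of $\mathcal{L}_{I_{\rm C}}$ used to prove Theorem~\ref{theo_sir_ccdf_mu}, which avoids any concern about the radius of convergence of the series.

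Because the integrand $(1-(1+st)^{-N})t^{-\alpha-1}$ is nonnegative, I can enlarge the domain of integration from $[0,1]$ to $[0,\infty)$ to produce an upper bound; this is exactly the relaxation that makes the resulting integral solvable in closed form. The substitution $y=st$ factors out $s^{\alpha}$, and an integration by parts (whose boundary terms vanish precisely because $0<\alpha<1$) reduces the problem to the Beta integral $\int_0^\infty y^{-\alpha}(1+y)^{-N-1}\,\mathrm{d}y=B(1-\alpha,N+\alpha)$, giving
\begin{align*}
\alpha\int_0^\infty\left(1-(1+st)^{-N}\right)t^{-\alpha-1}\,\mathrm{d}t=s^{\alpha}\,\Gamma(1-\alpha)\,\frac{\Gamma(N+\alpha)}{\Gamma(N)}.
\end{align*}

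It then remains to replace $\Gamma(N+\alpha)/\Gamma(N)$ by $N^{\alpha}$, which is where log-convexity of the Gamma function enters: writing $N+\alpha=(1-\alpha)N+\alpha(N+1)$ as a convex combination and using convexity of $\ln\Gamma$ gives $\Gamma(N+\alpha)\le\Gamma(N)^{1-\alpha}\Gamma(N+1)^{\alpha}=N^{\alpha}\Gamma(N)$, hence $\Gamma(N+\alpha)/\Gamma(N)\le N^{\alpha}$. Combining this with the identity $\alpha(-\Gamma(-\alpha))=\Gamma(1-\alpha)$ (from $\Gamma(1-\alpha)=-\alpha\,\Gamma(-\alpha)$) yields ${}_2F_1(N,-\alpha,1-\alpha,-s)\le 1+s^{\alpha}N^{\alpha}\Gamma(1-\alpha)=1+s^{\alpha}\alpha N^{\alpha}(-\Gamma(-\alpha))$, and taking reciprocals delivers the claim. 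I expect the main obstacle to be the closed-form evaluation step: the difficulty is not the algebra but choosing the right coarse relaxation (extending to $[0,\infty)$) so that the truncated integral becomes an exactly computable Beta integral, while confirming that the integration-by-parts boundary terms vanish. The final Gamma-ratio bound is then a routine consequence of log-convexity, and the slack introduced in both relaxations is consistent with the result being a lower bound on $\mathcal{L}_{I_{\rm C}}$ rather than an identity.
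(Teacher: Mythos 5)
Your proof is correct, and its skeleton coincides with the paper's: pass to reciprocals, use the integral representation of ${}_2F_1$ coming from the PGFL computation (your $1+\alpha\int_0^1(1-(1+st)^{-N})t^{-\alpha-1}\,{\rm d}t$ is exactly the paper's $1+2\int_1^\infty(1-(1+st^{-\beta})^{-N})t\,{\rm d}t$ after the substitution $t\mapsto t^{-\beta}$), and then enlarge the domain of integration to $[0,\infty)$. Where you genuinely diverge is in the final estimation. The paper keeps the integrand and applies the pointwise bound $(1+st^{-\beta})^{-N}\ge e^{-sNt^{-\beta}}$ (i.e.\ $e^x\ge 1+x$), after which the relaxed integral evaluates exactly to $\frac{2}{\beta}(sN)^{2/\beta}(-\Gamma(-\tfrac{2}{\beta}))$. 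You instead evaluate the relaxed integral exactly as the Beta integral $s^{\alpha}\Gamma(1-\alpha)\Gamma(N+\alpha)/\Gamma(N)$ and only then bound the Gamma ratio by $N^{\alpha}$ via log-convexity of $\ln\Gamma$ (a Wendel/Gautschi-type inequality); combined with $\Gamma(1-\alpha)=-\alpha\Gamma(-\alpha)$ this lands on the identical constant. Both chains of inequalities point the right way, and your integration-by-parts boundary terms do vanish for $0<\alpha<1$. The paper's route is marginally more elementary; yours buys a sharper intermediate statement — it isolates the exact value of the extended integral and shows the only additional slack is $\Gamma(N+\alpha)/(N^{\alpha}\Gamma(N))$, which tends to $1$, so the bound is asymptotically tight in $N$ given the domain extension. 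One small caveat: your series manipulation for the integral representation is only valid inside the radius of convergence, but you correctly note that the representation follows for all $s\ge 0$ from the Laplace-transform derivation in the proof of Theorem~\ref{theo_sir_ccdf_mu}, so nothing is lost.
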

\begin{proof}
Rewriting $\mathcal{L}_{I_{\rm C}}(s)$, we have
\begin{align}
&\frac{1}{{}_2F_1\left(N, -\frac{2}{\beta}, 1-\frac{2}{\beta}, -s \right)}  \nonumber \\
&=  \frac{1}{1+\left(2\int_{1}^{\infty} \left(1-\left(\frac{1}{1+st^{-\beta}} \right)^N \right)t{\rm d} t \right)} \nonumber \\
& \ge \frac{1}{1+\left(2\int_{0}^{\infty} \left(1-\left(\frac{1}{1+st^{-\beta}} \right)^N \right)t{\rm d} t \right)} \nonumber \\
& \mathop {\ge} \limits^{(a)} \frac{1}{1+\left(2\int_{0}^{\infty} \left(1-e^{-s Nt^{-\beta}} \right)t{\rm d} t \right)} \nonumber \\
& \mathop = \limits^{(b)} \frac{1}{1+\left(\frac{2}{\beta}s^{\frac{2}{\beta}} N^{\frac{2}{\beta}}\int_{0}^{\infty} \left(1-e^{-u} \right)u^{-1-\frac{2}{\beta}}{\rm d} u \right)} \nonumber \\
& = \frac{1}{1+s^{\frac{2}{\beta}}\frac{2}{\beta}N^{\frac{2}{\beta}}\left(-\Gamma(-\frac{2}{\beta}) \right)},
\end{align}
where (a) follows $1/(1+st^{-\beta}) \ge e^{-st^{-\beta}}$, thereby
\begin{align}
1-\left(\frac{1}{1+st^{-\beta}}\right)^N \le 1-e^{-sNt^{-\beta}},
\end{align}
and (b) comes from variable change $sNt^{-\beta} = u$. This completes the proof.
\end{proof}

By leveraging Lemma \ref{lem_lower_MGF_ICI}, we derive an approximate lower bound in the following theorem. 

\begin{theorem} \label{them_low_approx}
As $T_{\rm c} \gg 1$, a lower bound on the optimum feedback bits is approximated as
\begin{align}
B^{\star}_{\rm ZF} \ge B^{\star}_{\rm L,ZF} & \approx \tilde B^{\star}_{\rm L,ZF} \nonumber \\
&= \left(N-1 \right)\log_2\left(\frac{\beta \Gamma\left(1-\frac{2}{\beta}\right) T_{\rm c}}{ 2 N \left(-\Gamma(-\frac{2}{\beta}) \right)}\right)^{\frac{\beta}{2}}.
\end{align}
\end{theorem}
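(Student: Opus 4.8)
The plan is to pin down the maximizer of the net spectral efficiency $R_{\rm Net}(B)=R_{\rm ZF}^{k}(B)-B/T_{\rm c}$ in \eqref{net_per} through its first-order condition and then read off the large-$T_{\rm c}$ behavior. It is convenient to work with $\delta=2^{-B/(N-1)}$, so that $B=(N-1)\log_2(1/\delta)$ and a larger $B$ means a smaller $\delta$. Starting from the exact rate of Theorem \ref{mu_rate} and noting that only the IUI factor $(1+z\delta)^{-(N-1)}$ carries a dependence on $B$, I would differentiate under the integral sign; the chain rule with $\partial\delta/\partial B=-(\ln 2/(N-1))\,\delta$ and the cancellation $\log_2 e\cdot\ln 2=1$ collapse all prefactors and leave
\begin{align}
\frac{\partial R_{\rm ZF}^{k}}{\partial B}=\delta\int_{0}^{\infty}\frac{z}{(1+z)(1+z\delta)^{N}}\,\mathcal{L}_{I_{\rm C}}(z)\,{\rm d}z,
\end{align}
with $\mathcal{L}_{I_{\rm C}}$ the ICI Laplace transform of Theorem \ref{theo_sir_ccdf_mu}.

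I would then insert the lower bound $\mathcal{L}_{I_{\rm C}}(z)\ge 1/(1+c\,z^{2/\beta})$ of Lemma \ref{lem_lower_MGF_ICI}, with $c=\tfrac{2}{\beta}N^{2/\beta}(-\Gamma(-\tfrac{2}{\beta}))=N^{2/\beta}\Gamma(1-\tfrac{2}{\beta})$, and use $(1+z\delta)^{-N}\ge e^{-Nz\delta}$. Since both substitutions lower a positive integrand, they produce a rigorous lower bound $L(\delta)\le\partial R_{\rm ZF}^{k}/\partial B$, and the value $B^{\star}_{\rm L,ZF}$ solving $L(\delta)=1/T_{\rm c}$ is a lower bound on the true optimizer. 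The justification is monotonicity: the net rate is concave in $B$ (diminishing returns of quantization accuracy), so $\partial R_{\rm ZF}^{k}/\partial B$ is decreasing in $B$, and from $\partial R_{\rm ZF}^{k}/\partial B|_{B^{\star}_{\rm L,ZF}}\ge L(\delta)=1/T_{\rm c}=\partial R_{\rm ZF}^{k}/\partial B|_{B^{\star}_{\rm ZF}}$ one concludes $B^{\star}_{\rm ZF}\ge B^{\star}_{\rm L,ZF}$. The SCVQ upper-bound property noted in Remark \ref{remark_qub} confirms that this remains a lower bound for any practical codebook.

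To turn the implicit $B^{\star}_{\rm L,ZF}$ into the explicit $\tilde B^{\star}_{\rm L,ZF}$ I would evaluate $L(\delta)$ as $T_{\rm c}\to\infty$, i.e.\ $\delta\to0$. The factor $e^{-Nz\delta}$ concentrates the integrand near $z\sim 1/(N\delta)\gg1$, where $\tfrac{1}{1+z}\approx\tfrac1z$ and $\tfrac{1}{1+cz^{2/\beta}}\approx\tfrac{1}{c\,z^{2/\beta}}$, so the integral collapses to $\int_{0}^{\infty}z^{-2/\beta}e^{-Nz\delta}\,{\rm d}z=(N\delta)^{2/\beta-1}\Gamma(1-\tfrac{2}{\beta})$. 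Substituting $c$ and simplifying with the Gamma recurrence $-\Gamma(-\tfrac{2}{\beta})=\tfrac{\beta}{2}\Gamma(1-\tfrac{2}{\beta})$ gives the clean leading order $\partial R_{\rm ZF}^{k}/\partial B\approx\delta^{2/\beta}/N$. Setting this equal to $1/T_{\rm c}$ yields $\delta^{2/\beta}=N/T_{\rm c}$, hence $B=(N-1)\tfrac{\beta}{2}\log_2(T_{\rm c}/N)$; rewriting $N/T_{\rm c}$ through the same recurrence recovers precisely $\tilde B^{\star}_{\rm L,ZF}=(N-1)\log_2\big(\beta\Gamma(1-\tfrac{2}{\beta})T_{\rm c}/(2N(-\Gamma(-\tfrac{2}{\beta})))\big)^{\beta/2}$.

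The main obstacle is this asymptotic reduction of $L(\delta)$: I must verify that the dominant contribution genuinely comes from $z\sim1/(N\delta)$ and that discarding the two ``$1+$'' terms costs only lower-order corrections, since these are approximations rather than bounds and are exactly what converts the rigorous-but-implicit $B^{\star}_{\rm L,ZF}$ into the closed-form $\tilde B^{\star}_{\rm L,ZF}$ (explaining the ``$\approx$'' in the statement). A secondary requirement is confirming the concavity of the net rate in $B$, which underwrites both the first-order-condition argument and the passage from the derivative inequality to the ordering $B^{\star}_{\rm ZF}\ge B^{\star}_{\rm L,ZF}$.
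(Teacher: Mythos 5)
Your proposal follows essentially the same route as the paper's proof: the same derivative expression $\delta\int_0^\infty \frac{z}{(1+z)(1+z\delta)^N}\mathcal{L}_{I_{\rm C}}(z)\,{\rm d}z$, the same application of Lemma \ref{lem_lower_MGF_ICI} together with $(1+z\delta)^{-N}\ge e^{-Nz\delta}$, and the same large-$z$ approximations leading to $\partial R/\partial B \approx \delta^{2/\beta}/N$ and hence to $\tilde B^{\star}_{\rm L,ZF}$ (the paper keeps the Gamma constants unsimplified, but by $-\Gamma(-\tfrac{2}{\beta})=\tfrac{\beta}{2}\Gamma(1-\tfrac{2}{\beta})$ the two forms coincide, as you note). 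The only cosmetic difference is that you justify dropping the ``$1+$'' terms by a concentration argument at $z\sim 1/(N\delta)$, whereas the paper splits the integral at a finite $C$ and shows the $[0,C)$ contribution vanishes as $\delta\to 0$; both rest on the same asymptotics, and your explicit flagging of the monotonicity of the derivative and of the non-rigorous nature of these final approximations matches what the paper uses implicitly.
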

\begin{proof}
See Appendix \ref{proof:thm6}.
\end{proof}
The verification for Theorem \ref{them_low_approx} will be provided in Section V. 

Next, to find an intuitive form of $\tilde{B}^{\star}_{\rm L, ZF}$, we more approximate $\tilde {B}^{\star}_{\rm L, ZF}$ in the following corollary, 
\begin{corollary} \label{coro_mu_approx}
For $T_{\rm c} \gg 1$, an approximate lower bound on the optimum feedback bits, denoted as $\tilde{B}^{\star}_{\rm L, ZF}$, is further approximated as
\begin{align}
B^{\star}_{\rm ZF} \ge B^{\star}_{\rm L, ZF} \approx \tilde B^{\star}_{\rm L, ZF} \approx (N-1)\frac{\beta}{2}\log_2\left(T_{\rm c} \right).
\end{align}
\end{corollary}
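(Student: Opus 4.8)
The plan is to start from the closed-form approximation $\tilde B^{\star}_{\rm L,ZF}$ already established in Theorem \ref{them_low_approx} and simplify it asymptotically in the regime $T_{\rm c} \gg 1$, following exactly the same strategy used for the single-user case in Corollary \ref{coro_sing_approx}. Since Theorem \ref{them_low_approx} has done all the substantive work, this corollary is a routine asymptotic cleanup rather than a new derivation. First I would apply the logarithm power rule to pull the exponent $\beta/2$ outside the logarithm, rewriting the expression as
\begin{align}
\tilde B^{\star}_{\rm L,ZF} = (N-1)\frac{\beta}{2}\log_2\left(\frac{\beta\, \Gamma(1-\frac{2}{\beta})\, T_{\rm c}}{2 N (-\Gamma(-\frac{2}{\beta}))}\right).
\end{align}

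Next I would split the logarithm of the product into a sum, isolating the $T_{\rm c}$-dependent term from the term that depends only on $N$ and $\beta$:
\begin{align}
\tilde B^{\star}_{\rm L,ZF} = (N-1)\frac{\beta}{2}\left[\log_2(T_{\rm c}) + \log_2\left(\frac{\beta\, \Gamma(1-\frac{2}{\beta})}{2 N (-\Gamma(-\frac{2}{\beta}))}\right)\right].
\end{align}
This separation makes the dominant scaling transparent: the first bracketed term grows without bound in $T_{\rm c}$, while the second is a fixed constant once $N$ and $\beta$ are specified.

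The final step is to argue that, for $T_{\rm c}$ large enough, $\log_2(T_{\rm c})$ dominates the constant $\log_2\left(\frac{\beta\, \Gamma(1-\frac{2}{\beta})}{2N(-\Gamma(-\frac{2}{\beta}))}\right)$, so the latter may be dropped to yield $\tilde B^{\star}_{\rm L,ZF} \approx (N-1)\frac{\beta}{2}\log_2(T_{\rm c})$. This mirrors step (b) in the proof of Corollary \ref{coro_sing_approx}. The only point requiring any care is confirming that the discarded constant is genuinely finite and hence $o(\log_2 T_{\rm c})$: for the admissible range $\beta > 2$ we have $2/\beta \in (0,1)$, so $\Gamma(1-\frac{2}{\beta})$ is finite and positive, while $-\frac{2}{\beta}\in(-1,0)$ makes $-\Gamma(-\frac{2}{\beta})$ finite and positive as well; with $N \ge 2$ fixed, the argument of the inner logarithm is a well-defined positive constant. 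There is no substantive obstacle here—the statement follows directly as an asymptotic simplification of Theorem \ref{them_low_approx}, and I would close by noting that the resulting $B^{\star}_{\rm ZF}$ scales linearly in $N-1$, linearly in $\beta$, and logarithmically in $T_{\rm c}$, in contrast to the MRT scaling which lacks the $\beta/2$ factor.
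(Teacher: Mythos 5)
Your proposal is correct and follows essentially the same route as the paper's proof: pull the $\beta/2$ exponent out of the logarithm, split off the $T_{\rm c}$-independent constant, and discard it as $o(\log_2 T_{\rm c})$ for large $T_{\rm c}$. Your extra check that the discarded constant is finite for $\beta>2$ is a harmless refinement the paper leaves implicit.
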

\begin{proof}
The proof is similar to that of Corollary \ref{coro_sing_approx}. When $T_{\rm c} \gg 1$, we have the following approximation.
\begin{align}
&\tilde B^{\star}_{\rm L,ZF} \nonumber \\
&= \left(N-1 \right)\log_2\left(\frac{\beta \Gamma\left(1-\frac{2}{\beta}\right) T_{\rm c}}{ 2 N \left(-\Gamma(-\frac{2}{\beta}) \right)}\right)^{\frac{\beta}{2}} \nonumber \\
&= (N-1) \log_2\left( T_{\rm c}\right)^{\frac{\beta}{2}} + (N-1) \log_2\left( \frac{\beta \Gamma\left( 1-\frac{2}{\beta}\right)}{2N \left(-\Gamma\left(-\frac{2}{\beta} \right) \right)}\right)^{\frac{\beta}{2}} \nonumber \\
& \mathop {\approx}^{(a)} (N-1) \log_2\left( T_{\rm c}\right)^{\frac{\beta}{2}},
\end{align}
where (a) comes from that
\begin{align}
(N-1) \log_2\left( T_{\rm c}\right)^{\frac{\beta}{2}} \gg (N-1) \log_2\left( \frac{\beta \Gamma\left( 1-\frac{2}{\beta}\right)}{2N \left(-\Gamma\left(-\frac{2}{\beta} \right) \right)}\right)^{\frac{\beta}{2}} 
\end{align}
when $T_{\rm c}$ is large enough. This completes the proof.
\end{proof}

Similar to single-user MRT, $\tilde{B}^{\star}_{\rm L, ZF}$ is not a function of instantaneous 
SIR since it averages over all the randomness that affects the SIR. 
The different feature from single-user MRT is that the optimal number of feedback bits increases with the pathloss exponent linearly. A reasonable explanation for this observation can be found in the definition of the net spectral efficiency.  
Considering the net spectral efficiency, a major factor that determines the optimum number of feedback bits is the feedback efficiency, which measures how much downlink ergodic spectral efficiency improves when increasing the small number of feedback bits. When the pathloss exponent is small, the typical user has a large amount of the ICI, therefore the operating SIR is chiefly determined by the dominant ICI. In this case, the feedback efficiency is likely to be low since the downlink spectral efficiency only marginally improves even when mitigating the IUI by increasing the number of feedback bits. For this reason, the optimum number of feedback bits is small.
In the opposite case, if the pathloss exponent is large, the typical user has less amount of ICI. Then, the IUI mainly determines the operating SIR. When the IUI decreases by increasing the number of feedback bits, the downlink spectral efficiency would improve dramatically compared to the case of a small pathloss exponent. As a result, the optimum number of feedback bits is large.

On the contrary, in single-user MRT, the feedback information is used to boost up the desired signal power, so that the number of feedback bits $B$ is only related to
the desired signal term while 
not affecting the interference term.
For this reason, normalizing the SIR \eqref{sir_su} by the desired signal's pathloss
allows us to represent the SIR as the ratio between a function of $B$ and a function of $\beta$. 
We write it as 
\begin{align}
R_{\rm MRT} &= \log_2\left(1+\frac{S(B)}{I(\beta)} \right).
\end{align}
Then, for large $T_{\rm c}$, we approximate the above as 
\begin{align} \label{approx:mrt_beta}
\log_2\left(1+\frac{S(B)}{I(\beta)} \right)  &\mathop {\approx}^{(a)} \log_2\left(\frac{S(B)}{I(\beta)} \right) \nonumber \\
&= \log_2(S(B)) - \log_2(I(\beta)).
\end{align}
The high SIR assumption (a) is justified as follows: for large $T_{\rm c}$, using many feedback bits is encouraged, so the desired signal power is relatively high compared to the inter-cell interference. 
When differentiating \eqref{approx:mrt_beta} with respect to $B$, the $\beta$-related term vanishes, thereby the feedback efficiency is independent of $\beta$,
and the optimum feedback rate is also independent of $\beta$. 

\section{Numerical Comparisons}
In this section, we verify the obtained analytical expressions by comparing to  simulation results. We assume that the BSs are distributed as a homogeneous PPP with $\lambda = 10^{-5}/\pi$ and that the pathloss exponent $\beta = 4$. As a channel quantization method, we use RVQ, which serves a lower bound performance on the limited feedback strategy. 
We assume Rayleigh fading, where each channel coefficient is drawn from IID complex Gaussian random variables. The number of iterations is $5000$. 

First, we verify the analytical results of single-user MRT. Fig.~\ref{qub_rvq_mrt} shows the comparison between Theorem \ref{rate_sing} and the simulation results. 
As observed in this figure, the performance gap between SCVQ and RVQ is less than 1 $\rm bps/Hz$, and this gap becomes vanish as the number of used feedback bits increases. 
Fig.~\ref{sin_verify} illustrates the gap between an analytical lower bound $B^{\star}_{\rm L, MRT}$ and the exact $B^{\star}_{\rm MRT}$ obtained numerically. 
We use $\lceil B^{\star}_{\rm L, MRT} \rceil$ to draw the analytical lower bound.
In the graph, the gap between numerically obtained $B^{\star}_{\rm MRT}$ and $B^{\star}_{\rm L, MRT}$ is tight over the entire range of $T_{\rm c}$, therefore $B^{\star}_{\rm L, MRT}$ is a good indicator of $B^{\star}_{\rm MRT}$. 

\begin{figure}[!t]
\centerline{\resizebox{0.8\columnwidth}{!}{\includegraphics{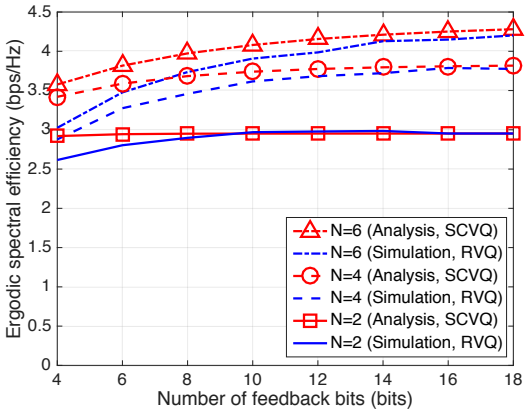}}}    
\caption{Single-user MRT ergodic spectral efficiency comparison vs. the number of feedback bits sharing both analysis with SCVQ and simulation with RVQ. It is assumed that $\lambda = 10^{-5}/\pi$ and $\beta = 4$.}
 \label{qub_rvq_mrt}
\end{figure}

\begin{figure}[!t]
\centerline{\resizebox{0.84\columnwidth}{!}{\includegraphics{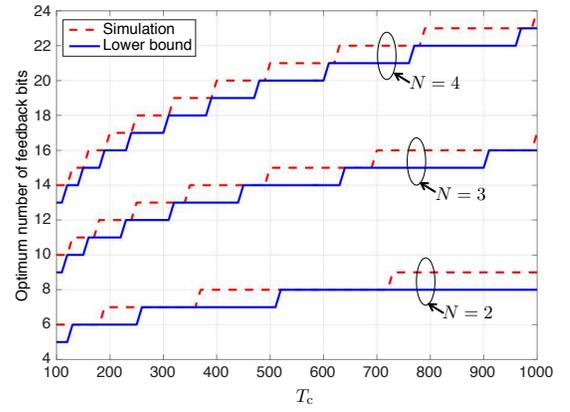}}}    
\caption{Illustration for verfifying an analytical lower bound $B^{\star}_{\rm L, MRT}$ by comparing the exact $B^{\star}_{\rm MRT}$ obtained numerically. It is assumed that $\beta = 4$.}
 \label{sin_verify}
\end{figure}

Next, we provide verification of the analytical results of multi-user ZF. 
As shown in Fig.~\ref{qub_rvq_zf}, the gap between the analysis and simulation is less than $1 {\rm bps/Hz}$, and the gap becomes smaller as the number of feedback bits increases. 
Fig.~\ref{mul_verify} shows the comparison between $ B^{\star}_{\rm ZF}$ obtained numerically and $\tilde B^{\star}_{\rm L,ZF}$ derived in Theorem \ref{them_low_approx}.
When drawing the figure, we use $\lceil \tilde B^{\star}_{\rm L,ZF}\rceil$.
As observed in the figure, the gap between the optimum feedback bits $B^{\star}_{\rm ZF}$ and the analytical approximation $\tilde B^{\star}_{\rm L, ZF}$ is small over all range of $T_{\rm c}$. 

\begin{figure}[!t]
\centerline{\resizebox{0.84\columnwidth}{!}{\includegraphics{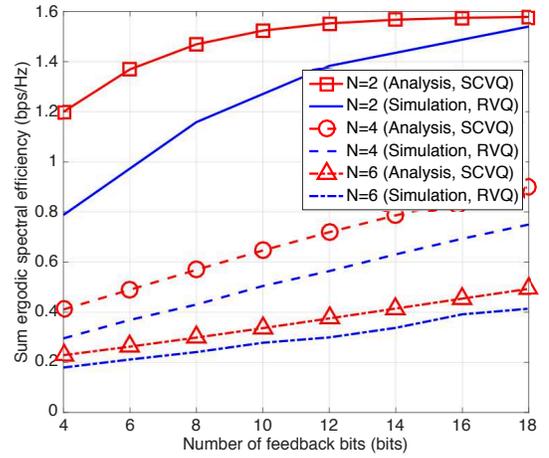}}}    
\caption{
Multi-user ZF ergodic spectral efficiency comparison vs. the number of feedback bits sharing both analysis with SCVQ and simulation with RVQ. It is assumed that $\lambda = 10^{-5}/\pi$ and $\beta = 4$.}
 \label{qub_rvq_zf}
\end{figure}

\begin{figure}[!t]
\centerline{\resizebox{0.84\columnwidth}{!}{\includegraphics{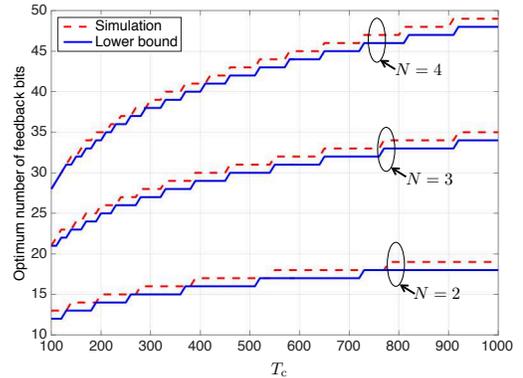}}}    
\caption{Illustration for verfifying an analytical lower bound $\tilde B^{\star}_{\rm L, ZF}$ by comparing the exact $B^{\star}_{\rm ZF}$ obtained numerically. It is assumed that $\beta = 4$.}
 \label{mul_verify}
\end{figure}

Finally, we numerically compare the ergodic spectral efficiency of single-user MRT and multi-user ZF when the number of feedback bits are provided as $B^{\star}_{\rm L, MRT}$ and $\tilde {B}^{\star}_{\rm L, ZF}$, respectively. Considering the net spectral efficiency, each transmission method achieves their maximum performance when using $B^{\star}_{\rm L, MRT}$ and $\tilde {B}^{\star}_{\rm L, ZF}$.
\begin{figure}[!t]
\centerline{\resizebox{0.85\columnwidth}{!}{\includegraphics{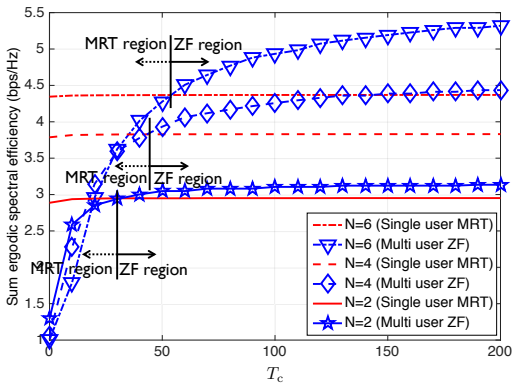}}}    
\caption{Sum ergodic spectral efficiency comparison between  single-user MRT and multi-user ZF. In each method, the feedback bits are provided as $B^{\star}_{\rm L, MRT}$ and $\tilde{B}^{\star}_{\rm L, ZF}$, respectively. It is assumed that $\beta = 4$.}
 \label{mrt_vs_zf}
\end{figure}
In Fig.~\ref{mrt_vs_zf}, it is observed that there exists a threshold of $T_{\rm c}$ that separates the whole region into two regions. In each region, called the MRT region and the ZF region, the maximum performance of one transmission method is better than that of the other transmission method. For instance, as $T_{\rm c}$ increases, the sum ergodic spectral efficiency of multi-user ZF dominates that of single-user MRT. From this observation, we can conclude that if $T_{\rm c}$ larger than a particular threshold, using multi-user ZF for extracting the multiplexing gain is more beneficial than single-user MRT for the array gain.

\section{Conclusions}

In this paper, we derived a tight lower bound on the optimum number of feedback bits $B^{\star}_{\rm L, MRT}$ for single-user MRT and $\tilde{B}^{\star}_{\rm L, ZF}$ for multi-user ZF when deployed in a cellular network. The main results are summarized as $B^{\star}_{\rm L, MRT} \approx (N-1)\log_2\left(T_{\rm c} \right)$ and $\tilde{B}^{\star}_{\rm L, ZF} \approx (N-1)\frac{\beta}{2}\log_2\left(T_{\rm c} \right)$. 
In both cases, the optimum number of bits scales linearly with the number of antennas and logarithmically with the length of the coherence block. The multi-user ZF approach requires typically more feedback (usually $\beta>2$) since higher resolution of CSIT is required to mitigate the limited feedback-induced interference.
These are new findings, compared to prior analyses of limited feedback. For example, in
\cite{1715541}, assuming multi-user ZF, it was shown that the number of feedback bits should scale as $(N-1)\log_2\left({\rm SNR}\right)$ to achieve a constant performance gap to the perfect CSIT case. This result holds under the assumption of a deterministic user location by treating the ICI as additive Gaussian noise. 
An implication of the result of \cite{1715541} is that the channel codebook size depends on the users' locations, which requires adaptive codebooks to implement unlike our result. 
In this sense, our results propose more general channel codebook size than \cite{1715541} by averaging the effects of not only short-term channel fading but also long-term pathloss determined by a user location. 
However, one should note that stochastic averaging over the SIR is not the reason that $T_{\rm c}$ appears in the obtained expressions of $B^{\star}_{\rm L, MRT}$ or $\tilde{B}^{\star}_{\rm L, ZF}$. The main reason for this is our particular selection of the uplink penalization factor as $1/T_{\rm c}$. 
If a different penalty function is considered, then the results may change.

It is also worthwhile to compare with other work that used limited feedback and considered ICI. In \cite{5755206, 5648782, 6410048}, feedback bit allocation methods were proposed for multi-user multi-cell coordinated beamforming under the assumption of deterministic BSs locations. The core idea is to allocate the feedback bits to cooperative BSs in proportion to their received power. The rationale is that the limited feedback achieves high efficiency by assigning more feedback bits to 
BSs whose signal power is relatively strong.
For this reason, when the pathloss exponent increases, the fraction of feedback dedicated to the primary serving (the closest) BS increases since the relative signal power coming from the far distance diminishes quickly.
Our result assumes random users and BSs locations, but we observe a similar relationship on the pathloss exponent in Corollary \ref{coro_mu_approx}. When the pathloss exponent increases, the ICI power decays drastically, and the IUI power becomes relatively strong. This causes more feedback bits to be needed for managing the IUI.

There are many interesting directions left as future work. 
One possible direction is to consider CQI feedback. The acquired CQI can be exploited for sophisticated user scheduling, creating a link between the user scheduling and the used feedback bits.
Another direction is assuming a multi-antenna heterogeneous network with limited feedback. Since it was shown that the SIR coverage probability is related to the BS densities \cite{6287527} in  heterogeneous networks, the optimum number of feedback bits will likely relate to the BS densities. It would be also interesting to consider a cooperative multi-antenna cellular network with finite feedback. In a cooperative network, the required amount of limited feedback is a function of the BS cluster size \cite{NY:dynamic}. Another interesting direction is to consider different sources of CSIT inaccuracy. For example, the effect of feedback delay \cite{6177666, 6656848} or channel estimation error \cite{1193803} can be characterized with a similar performance metric as in this paper. 
In addition to that, 
generalizing to allow other numbers of users, e.g. $1 < K < N$, and more sophisticated beamforming strategies are also of interest.

\appendices
\section{Proof of Theorem 1} \label{proof:thm1}
We start the proof by rewriting the CCDF of the instantaneous SIR of the single-user case \eqref{sir_ccdf_su}
\begin{align} \label{theo_sir_ccdf_su_derivation1}
&\mathbb{P}\left[{\rm SIR_{SU}} > \gamma \right] \nonumber \\
&= \mathbb{P}\left[\left\|{\bf{h}}_1 \right\| ^2\left|\tilde {\bf{h}}_1^*  \hat {\bf{h}}_1 \right|^2> \gamma \left\| {\bf{d}}_1 \right\|^{\beta}   { \sum_{i=2}^{\infty}\left\| {\bf{d}}_i \right\|^{-\beta} \left| {\bf{h}}_{1,i}^*{\bf{v}}_i \right|^{2} }\right] \nonumber \\
& \mathop {=} \limits^{(a)} \mathbb{P}\left[\left\|{\bf{h}}_1 \right\| ^2> \gamma \frac{1}{\cos^2\theta_1}\left\| {\bf{d}}_1 \right\|^{\beta}   { \sum_{i=2}^{\infty}\left\| {\bf{d}}_i \right\|^{-\beta} \left| {\bf{h}}_{1,i}^*{\bf{v}}_i \right|^{2} }\right],
\end{align}
where (a) follows ${\rm sin^2}\theta_1 = 1 - \left|\tilde {\bf{h}}_1^* \hat{\bf{h}}_1 \right|^2$. Define 
$I = \left\| {\bf{d}}_1 \right\|^{\beta}   { \sum_{i=2}^{\infty}\left\| {\bf{d}}_i \right\|^{-\beta} \left| {\bf{h}}_{1,i}^*{\bf{v}}_i \right|^{2} }$ and $\mathcal{L}_X(s) = \mathbb{E}_{X}\left[e^{-sX} \right]$.
Then, we rewrite \eqref{theo_sir_ccdf_su_derivation1} as
\begin{align}
&\mathbb{P}\left[\left\|{\bf{h}}_1 \right\| ^2>  \frac{I \gamma}{\cos^2 \theta_1}\right]  \nonumber \\
& \mathop {=} \limits^{(a)} \mathbb{E}\left[ \mathbb{E}\left[\left.   \sum_{m=0}^{N-1} \frac{\gamma^m}{m!}  \frac{I^m }{\cos^{2m}\theta_1 } \exp\left( -\frac{I \gamma}{\cos^2 \theta_1} \right) \right| \cos^2 \theta_1 , I \right] \right] \nonumber \\
&\mathop = \limits^{(b)} \sum_{m=0}^{N-1} \frac{\gamma^m}{m!} (-1)^m \left. \frac{\partial^m \mathcal{L}_{I/\cos^2\theta_1}(s)}{\partial s^m} \right| _{s = {\gamma}}, 
\end{align}
where (a) follows that $\left\| {\bf{h}}_1 \right\|^2$ follows the Chi-squared distribution with $2N$ degrees of freedom and (b) follows the derivative property of the Laplace transform, which is
$\mathbb{E}\left[ X^m e^{-sX}\right] = (-1)^m \partial^m \mathcal{L}_X (s) / \partial s^m$. Now we obtain $\mathcal{L}_{I/\cos^2\theta_1}(s)$. 
\begin{align} \label{laplace_su_cos}
&\mathcal{L}_{I/\cos^2\theta_1}(s) \nonumber \\
&= \mathbb{E}\left[e^{-\frac{s}{\cos^2 \theta_1}{\left\| {\bf{d}}_1 \right\|^{\beta}   { \sum_{i=2}^{\infty}\left\| {\bf{d}}_i \right\|^{-\beta} \left| {\bf{h}}_{1,i}^*{\bf{v}}_i \right|^{2} }}{}} \right], {\rm }\; z = \frac{s}{\cos^2\theta_1}, \nonumber \\
& \mathop {=} \limits^{(a)} \mathbb{E}_{R, \cos^2\theta_1}\Bigg[ \mathbb{E}_{\Phi \backslash \mathcal{B}\left(0, R \right)}\Bigg[ \nonumber \\
& \;\;\;\;\;\;\;\;\;\;\;\;\;\; \left. \prod_{{\bf{d}}_i \in \Phi \backslash \mathcal{B}\left(0, R \right)}
\frac{1}{1+zR^{\beta}\left\|{\bf{d}}_i  \right\|^{-\beta}}
\right| \left\| {\bf{d}}_1 \right\| = R, \cos^2\theta_1 \Bigg] \Bigg]  \nonumber \\
&\mathop {=} \limits^{(b)} \mathbb{E}_{R, \cos^2\theta_1} \left[  \exp \left(-2\pi \lambda  \int_{R}^{\infty} \frac{z R^{\beta}r^{-\beta+1}}{1 + z  R^{\beta} r^{-\beta}} {\rm d} r\right) \right] \nonumber \\
&= \mathbb{E}_{R, \cos^2\theta_1} \left[ \exp \left(-2\pi \lambda R^2 {z} \frac{{}_2F_1\left(1, \frac{-2+\beta}{\beta}, 2- \frac{2}{\beta}, -z \right)}{\beta-2} \right)  \right] \nonumber \\
&\mathop {=} \limits^{(c)} \mathbb{E}_{\cos^2\theta_1} \left[ \frac{\beta - 2}{\beta - 2 + 2z \cdot {}_2F_1\left(1, \frac{-2+\beta}{\beta}, 2- \frac{2}{\beta}, -z \right)} \right] \nonumber \\
&\mathop {=} \limits^{} \mathbb{E}_{\cos^2\theta_1} \Bigg[ \nonumber \\
&\;\;\;\;\;\;\;\;\;\;\;\;\;\frac{\beta - 2}{\beta - 2 + 2\frac{s}{\cos^2\theta_1} \cdot {}_2F_1\left(1, \frac{-2+\beta}{\beta}, 2- \frac{2}{\beta}, -\frac{s}{\cos^2\theta_1} \right)} \Bigg],
\end{align} 
where (a) follows that $\left| {\bf{h}}_{1,i}^*{\bf{v}}_i \right|^{2} = H_i \sim \exp\left(1\right)$ due to the random beamforming effect, (b) follows the probability generating functional (PGFL) of PPP, and (c) takes the expectation over $R$ from its probability density function (PDF)
\begin{align} \label{first_touch}
f_{\left\| {\bf{d}}_1\right\|}(r) = 2\lambda \pi r e^{-\lambda \pi r^2}.
\end{align}
By leveraging the PDF of $\sin^2\theta_1$, which is obtained by differentiating the CDF of $\sin^2\theta_1$ \eqref{sin_cdf}, we finally get an integral form of $\mathcal{L}_{I/\cos^2\theta_1}(s)$ as follows.
\begin{align}
&\mathcal{L}_{I/\cos^2\theta_1}(s) \nonumber \\
&= \mathbb{E}_{\cos^2\theta_1} \left[ \frac{\beta - 2}{\beta - 2 + 2\frac{s}{\cos^2\theta_1} \cdot {}_2F_1\left(1, \frac{-2+\beta}{\beta}, 2- \frac{2}{\beta}, -\frac{s}{\cos^2\theta_1} \right)} \right] \nonumber \\
&=  \mathbb{E}_{\sin^2\theta_1} \Bigg[ \nonumber \\
&\;\;\;\;\;\;\;\;\;\;\;\; \frac{\beta - 2}{\beta - 2 + 2\frac{s}{1 - \sin^2\theta_1} \cdot  {}_2F_1\left(1, \frac{-2+\beta}{\beta}, 2- \frac{2}{\beta}, -\frac{s}{1-\sin^2\theta_1} \right)} \Bigg] \nonumber \\
&= \int_{0}^{2^{-\frac{B}{N-1}}} \frac{(\beta - 2) \cdot 2^B (N-1)x^{N-2} }{\beta - 2 + 2\frac{s}{1 - x} \cdot {}_2F_1\left(1, \frac{-2+\beta}{\beta}, 2- \frac{2}{\beta}, -\frac{s}{1-x} \right)} {\rm d}x. 
\end{align}
\hfill $\square$

\section{Proof of Corollary 1} \label{proof:coro1}
The proof relies on Lemma 2 in \cite{lee:spectral}, which provides the following lower bound.
\begin{align} \label{ny_lowerbound}
\log_2\left(1+ \frac{e^{\ln \mathbb{E}\left[X \right]}}{\mathbb{E}[Y]} \right) \le \log_2\left(1+\frac{X}{Y} \right),
\end{align}
where $X$ and $Y$ are non-negative random variables. By using this, we have
\begin{align}
&\mathbb{E}\left[\log_2\left(1+{\rm{SIR}}_{{\rm{MRT}}} \right) \right] \nonumber \\
& = \mathbb{E}\left[\log_2\left(1+\frac{  \left\|{\bf{h}}_1 \right\|^2 \cos^2 \theta_1 }{ \left\| {\bf{d}}_1 \right\|^{\beta} \sum_{i=2}^{\infty}\left\| {\bf{d}}_i \right\|^{-\beta} \left| {\bf{h}}_{1,i}^*{{\bf{v}}}_i \right|^{2} } \right) \right] \nonumber \\
&\ge \log_2\left(1+\frac{\exp\left(\mathbb{E}\left[\ln(  \left\|{\bf{h}}_1 \right\|^2 \cos^2 \theta_1) \right] \right)}{\mathbb{E}\left[\left\| {\bf{d}}_1 \right\|^{\beta} \sum_{i=2}^{\infty}\left\| {\bf{d}}_i \right\|^{-\beta} \left| {\bf{h}}_{1,i}^*{{\bf{v}}}_i \right|^{2} \right]} \right). \label{lower_sing_coro1}
\end{align}
Calculating the numerator inside the log function in \eqref{lower_sing_coro1}, we obtain
\begin{align}
\mathbb{E}\left[\ln(  \left\|{\bf{h}}_1 \right\|^2 \cos^2 \theta_1) \right] &= \underbrace{\mathbb{E}\left[\ln(  \left\|{\bf{h}}_1 \right\|^2 ) \right]}_{(a)} + \underbrace{\mathbb{E}\left[\ln \left(\cos^2 \theta_1 \right) \right] }_{(b)}.
\end{align}
Since $\left\| {\bf{h}}_1 \right\|^2$ follows the Chi-square distribution with degrees of freedom $2N$, $({\rm a}) = \psi(N)$ where $\psi(\cdot)$ is the digamma function defined as
\begin{align}
\psi(x) = \int_{0}^{\infty}\frac{e^{-t}}{t} - \frac{e^{-xt}}{1-e^{-t}} {\rm d} t.
\end{align}
Furthermore, from \eqref{sin_cdf}, we obtain 
\begin{align}
\mathbb{E}\left[\ln \left(\cos^2 \theta_1 \right) \right] = 2^{B} \left({\rm{B}}\left(2^{-\frac{B}{N-1}},N,0 \right)+ 2^{-{B}}\ln\left( 1-2^{-\frac{B}{N-1}}\right) \right),
\end{align}
where ${\rm B}\left(\cdot, \cdot, \cdot\right)$ is the incomplete Beta function defined as
\begin{align}
{\rm B}\left(z, a, b \right) = \int_{0}^{z}t^{a-1}(1-t)^{b-1} {\rm d} t.
\end{align}
Since $2^{-\frac{B}{N-1}}<1$, ${\rm{B}}\left(2^{-\frac{B}{N-1}},N,0 \right) \ge 0$, therefore ${\rm (b)} \ge \ln\left( 1-2^{-\frac{B}{N-1}}\right)$.
Calculating the denominator in \eqref{lower_sing_coro1}, we obtain
$\mathbb{E}\left[\left\| {\bf{d}}_1 \right\|^{\beta} \sum_{i=2}^{\infty}\left\| {\bf{d}}_i \right\|^{-\beta} \left| {\bf{h}}_{1,i}^*{{\bf{v}}}_i \right|^{2} \right] = \frac{2}{\beta-2}$. 
Plugging the results of the calculation into \eqref{lower_sing_coro1}, the following lower bound is established.
\begin{align}
&\mathbb{E}\left[\log_2\left(1+{\rm{SIR}}_{{\rm{MRT}}} \right) \right] \nonumber \\
&\ge \log_2\left(1+\frac{\exp\left(\psi(N) + \ln\left( 1-2^{-\frac{B}{N-1}}\right) \right)}{2/(\beta-2)} \right) \nonumber \\ 
&= \log_2\left(1+\left(1-2^{-\frac{B}{N-1}}\right)\frac{\exp\left(\psi(N)  \right)}{2/(\beta-2)} \right).
\end{align} 
\hfill $\square$

\section{Proof of Theorem 3} \label{proof:thm3}
The optimum feedback bits $B^{\star}_{\rm MRT}$ satisfies 
\begin{align} \label{thm3_argmax}
B^{\star}_{\rm MRT} = \mathop {\arg \max} \limits_{B \in \mathbb{N} \cup 0} \left( \mathbb{E}\left[\log_2 \left(1 + {\rm{SIR}}_{\rm MRT} \right) \right] - B/T_{\rm c} \right).
\end{align}
Solving \eqref{thm3_argmax} requires NP-hard complexity since it is in a class of integer programming. To resolve this, we first relax the feasible field of $B$ to the real number, i.e., $B \in \mathbb{R}$. Later, we can turn back to the original feasible field of $B$ by $\lceil B \rceil$.
After relaxation, the optimum feedback bits $B^{\star}_{\rm MRT}$ satisfies the KKT condition of \eqref{thm3_argmax}, which is given by
\begin{align} \label{theo_low_sin1}
\left. \frac{\partial \mathbb{E}\left[\log_2 \left(1 + {\rm{SIR}}_{\rm MRT} \right) \right] }{\partial B}\right |_{B=B^{\star}_{\rm MRT}}  = 1/T_{\rm c}.
\end{align}
It is worthwhile to mention that ${\partial \mathbb{E}\left[\log_2 \left(1 + {\rm{SIR}}_{\rm MRT} \right) \right] }/{\partial B}$ can be interpreted as the feedback efficiency, since it means how much the ergodic spectral efficiency increases when increasing small amount of  feedback. Calculating ${\partial \mathbb{E}\left[\log_2 \left(1 + {\rm{SIR}}_{\rm MRT} \right) \right] }/{\partial B}$, we have
\begin{align} \label{theo_low_sin2}
&\frac{\partial \mathbb{E}\left[\log_2 \left(1 + {\rm{SIR}}_{\rm MRT} \right) \right] }{\partial B} \nonumber \\
&= 2^{-\frac{B}{N-1}}\int_{0}^{\infty}\frac{\beta - 2}{\beta - 2 + 2z \cdot {}_2F_1\left(1, \frac{-2+\beta}{\beta}, 2- \frac{2}{\beta}, -z \right)} \cdot \nonumber \\
&\;\;\;\;\;\;\;\;\;\;\;\;\;\;\;\;\;\;\; \left( \frac{1}{1+z}\right) \left(\frac{1}{1 + z\left(1 - 2^{-\frac{B}{N-1}} \right)} \right)^N {\rm d} z.
\end{align}
After plugging \eqref{theo_low_sin2} into ${\partial \mathbb{E}\left[\log_2 \left(1 + {\rm{SIR}}_{\rm MRT} \right) \right] }/{\partial B}$, solving \eqref{theo_low_sin1} about $B$ provides the optimum feedback bits $B^{\star}_{\rm MRT}$. Finding a solution in \eqref{theo_low_sin1} is not straightforward, however, since \eqref{theo_low_sin2} does not have a closed form expression. For this reason, instead of directly solving \eqref{theo_low_sin1}, we rather consider a lower bound on ${\partial \mathbb{E}\left[\log_2 \left(1 + {\rm{SIR}}_{\rm MRT} \right) \right] }/{\partial B}$. Denoting 
$(\cdot)_{\rm LB}$ as a lower bound on the corresponding term inside the parenthesis, we consider the following problem in place of \eqref{theo_low_sin1}.
\begin{align} \label{theo_low_sin3}
\left( \frac{\partial \mathbb{E}\left[\log_2 \left(1 + {\rm{SIR}}_{\rm MRT} \right) \right] }{\partial B} \right)_{\rm LB}  = 1/T_{\rm c}.
\end{align}
One important point here is that since $\frac{{\partial \mathbb{E}\left[\log_2 \left(1 + {\rm{SIR}}_{\rm MRT} \right) \right] }}{{\partial B}}$ is a monotonically decreasing function of $B$, a solution of \eqref{theo_low_sin3} provides a lower bound on the optimum number of  feedback bits.

Now, we focus on obtaining $\left(\frac{ {\partial \mathbb{E}\left[\log_2 \left(1 + {\rm{SIR}}_{\rm MRT} \right) \right] }}{{\partial B}} \right)_{\rm LB}$. To do this, we
define the following random variables: $I = \left\| {\bf{d}}_1 \right\|^{\beta}\sum_{i=2}^{\infty}\left\| {\bf{d}}_i \right\|^{-\beta} \left| {\bf{h}}_{1,i}^*{\bf{v}}_i \right|^{2},  E \sim \rm{exp}\left(1\right),  G \sim \Gamma\left(N, 1 - 2^{-\frac{B}{N-1}} \right)$,
where $\Gamma\left(\cdot, \cdot\right)$ is the Gamma distribution.
With simple calculations, the expectations of the defined random variables are 
$\mathbb{E}\left[I \right] = \frac{2}{\beta - 2}, \mathbb{E}\left[E \right] = 1, \mathbb{E}\left[ G \right] = N\left(1 - 2^{-\frac{B}{N-1}} \right)$.
Now we rewrite \eqref{theo_low_sin2} with $I,E$, and $G$.
\begin{align}
& 2^{-\frac{B}{N-1}}\int_{0}^{\infty}\frac{\beta - 2}{\beta - 2 + 2z \cdot {}_2F_1\left(1, \frac{-2+\beta}{\beta}, 2- \frac{2}{\beta}, -z \right)} \cdot \nonumber \\
&\;\;\;\;\;\;\;\;\;\;\;\;\;\;\;\;\;\;\;\;\;\;\;\;\;\;\; \left( \frac{1}{1+z}\right) \left(\frac{1}{1 + z\left(1 - 2^{-\frac{B}{N-1}} \right)} \right)^N {\rm d} z \nonumber \\
&= 2^{-\frac{B}{N-1}}\int_{0}^{\infty}\mathbb{E} \left[e^{-zI} \right]\mathbb{E} \left[e^{-zE} \right]\mathbb{E} \left[e^{-zG} \right] {\rm d} z \nonumber \\
&\mathop {\ge} \limits^{(a)} 2^{-\frac{B}{N-1}}\int_{0}^{\infty} e^{-z\mathbb{E}\left[I\right]} e^{-z\mathbb{E}\left[E\right]} e^{-z\mathbb{E}\left[G\right]} {\rm d} z \nonumber \\
&\mathop = \limits^{(b)} 2^{-\frac{B}{N-1}}\int_{0}^{\infty} e^{-z\frac{2}{\beta - 2}} e^{-z} e^{-zN\left(1-2^{-\frac{B}{N-1}} \right)}  {\rm d} z \nonumber \\
& = \frac{2^{-\frac{B}{N-1}}}{\frac{2}{\beta - 2} + 1 + N\left(1 - 2^{-\frac{B}{N-1}} \right)},\label{theo_low_sin_low}
\end{align}
where (a) follows Jensen's inequality and (b) comes from the expectations of the defined random variables $I, E,$ and $G$.
Plugging \eqref{theo_low_sin_low} into $\left( \frac{{\partial \mathbb{E}\left[\log_2 \left(1 + {\rm{SIR}}_{\rm MRT} \right) \right] }}{{\partial B}} \right)_{\rm LB}$ in \eqref{theo_low_sin3}, 
we have
\begin{align}
\frac{2^{-\frac{B}{N-1}}}{\frac{2}{\beta - 2} + 1 + N\left(1 - 2^{-\frac{B}{N-1}} \right)} = 1/T_{\rm c},
\end{align}
which has a solution
\begin{align}
B^{\star}_{\rm L, MRT} =  \left( N-1\right) \log_2\left(\frac{\left(\beta - 2\right) N + (\beta -2)T_{\rm c}}{\left(\beta - 2 \right)N + \beta} \right). 
\end{align}
\hfill $\square$

\section{Proof of Theorem 4} \label{proof:thm4}
Before starting the proof, we denote $H_i$, $i \in \{1,...,K\}$ as a random variable that follows the exponential distribution with unit mean.
Then, by Lemma \ref{lem_beta}, the CCDF of the instantaneous SIR is rewritten as
\begin{align} \label{theo_sir_ccdf_mu_derivation}
& \mathbb{P}\left[{\rm SIR}_{\rm ZF}^k > \gamma \right] \nonumber \\
&= \mathbb{P}\left[ \frac{\left\| {\bf{h}}_k \right\|^2 \beta\left(1, N-1\right) }{\Bigg(\begin{array}{c} \left\| {\bf{h}}_k \right\|^2 {\rm sin}^2 \theta_k  \sum_{ k' =1,k' \ne k}^{K} \beta\left(1, N-2 \right)  \\ +   \left\| {\bf{d}}_1 \right\|^{\beta}  \sum_{i=2}^{\infty}\left\| {\bf{d}}_i \right\|^{-\beta} \left\| {\bf{h}}_{k,i}^*{\bf{V}}_i \right\|^{2}\end{array} \Bigg)}  > \gamma \right] \nonumber \\
&\mathop {=} \limits^{(a)} \mathbb{P}\left[ \frac{H_k}{\Bigg(\begin{array}{c}2^{-\frac{B}{N-1}}  \sum_{ k' =1,k' \ne k}^{K} H_{k'}  \\+   \left\| {\bf{d}}_1 \right\|^{\beta}  \sum_{i=2}^{\infty}\left\| {\bf{d}}_i \right\|^{-\beta} \left\| {\bf{h}}_{k,i}^*{\bf{V}}_i \right\|^{2}\end{array}\Bigg)} > \gamma \right] \nonumber \\
&\mathop {=} \limits^{(b)} \mathbb{E}\Bigg[\exp\Bigg(-\gamma \Bigg(2^{-\frac{B}{N-1}}\sum_{ k' =1,k' \ne k}^{K} H_{k'}  \nonumber \\
&\;\;\;\;\;\;\;\;\;\;\;\;\;\;\;\;\;\;\;\;\;\;\;\;\;\;\;\;\;\;\;\; + \left\| {\bf{d}}_1 \right\|^{\beta}  \sum_{i=2}^{\infty}\left\| {\bf{d}}_i \right\|^{-\beta} \left\| {\bf{h}}_{k,i}^*{\bf{V}}_i \right\|^{2} \Bigg) \Bigg)  \Bigg] \nonumber \\
&\mathop {=} \limits^{(c)} \mathbb{E}\left[\exp\left(-\gamma 2^{-\frac{B}{N-1}}\sum_{ k' =1,k' \ne k}^{K} H_{k'} \right) \right]\cdot  \nonumber \\
&\;\;\;\;\;\;\;\;\;\;\;\;\;\;\;\;\; \mathbb{E}\left[\exp\left(-\gamma \left\| {\bf{d}}_1 \right\|^{\beta}  \sum_{i=2}^{\infty}\left\| {\bf{d}}_i \right\|^{-\beta} \left\| {\bf{h}}_{k,i}^*{\bf{V}}_i \right\|^{2}\right) \right],
\end{align}
where (a) follows that $\left\| {\bf{h}}_k \right\|^2 {\rm sin}^2 \theta_k \beta \left(1, N-2\right)$ has a distribution of $\Gamma(1, \delta)$ with $\delta = 2^{-\frac{B}{N-1}}$ by Lemma 1 in \cite{4299617} and Lemma \ref{lem_beta}, (b) follows that $H_k$ is an exponential random variable with unit mean, and (c) comes from the independence between the IUI and ICI terms. Note that \eqref{theo_sir_ccdf_mu_derivation} is a multiplication of the Laplace transforms of IUI and ICI. 
The Laplace transform of the IUI is
\begin{align} \label{laplace_iui_sir_ccdf_final}
\mathcal{L}_{I_{\rm U}}(s)
&\mathop {=} \limits^{(a)} \prod_{k'=1 k'\neq k}^{K} \mathbb{E}\left[\exp\left(-s 2^{-\frac{B}{N-1}}H_{k'} \right) \right] \nonumber \\
&\mathop {=} \limits^{(b)} \left( \frac{1}{1+s 2^{-\frac{B}{N-1}}} \right)^{N-1},
\end{align}
where (a) comes from our assumption that each of IUI term is independent and (b) follows the Laplace transform of an exponential random variable with unit mean and the assumption that $N=K$.
The Laplace transform of the ICI is 
\begin{align} 
&\mathcal{L}_{I_{\rm C}}(s) = \mathbb{E}\left[\exp\left(-s \left\| {\bf{d}}_1 \right\|^{\beta}  \sum_{i=2}^{\infty}\left\| {\bf{d}}_i \right\|^{-\beta} \left\| {\bf{h}}_{k,i}^*{\bf{V}}_i \right\|^{2}\right) \right] \nonumber \\
&\mathop {=} \limits^{(a)} \mathbb{E}_{R}\Bigg[\left. \right. \mathbb{E}_{\Phi \backslash \mathcal{B}(0,R)}\Bigg[ \nonumber \\
&\;\;\;\;\;\;\;\;\;\;\;\;\;\;\; \left. \prod_{{\bf{d}}_i \in \Phi \backslash \mathcal{B}(0,R)} \left( \frac{1}{1+sR^{\beta}\left\| {\bf{d}}_i\right\|^{-\beta}}\right)^K  \right| \left\| {\bf{d}}_1\right\| = R \Bigg]  \Bigg] \nonumber \\
&\mathop {=} \limits^{(b)} \mathbb{E}_{R}\left[ \exp\left(-2\pi \lambda  \int_{R}^{\infty} \left(1- \left(\frac{1}{1+sR^{\beta}r^{-\beta}} \right)^K \right)r{\rm d} r \right)  \right] \nonumber 
\end{align}
\begin{align} \label{laplace_ici_sir_ccdf_final}
&\mathop {=} \limits^{(c)} \mathbb{E}_{R}\left[ \exp\left(-2\pi \lambda R^2 \int_{1}^{\infty} \left(1-\left(\frac{1}{1+st^{-\beta}} \right)^K \right)t{\rm d} t \right) \right] \nonumber \\
&\mathop {=} \limits^{(d)} \frac{1}{1+\left(2\int_{1}^{\infty} \left(1-\left(\frac{1}{1+st^{-\beta}} \right)^K \right)t{\rm d} t \right)} \nonumber \\
&\mathop {=} \limits^{(e)} \frac{1}{{}_2F_1\left(N, -\frac{2}{\beta}, 1-\frac{2}{\beta}, -s \right)},
\end{align}
where (a) follows $\left\| {\bf{h}}_{k,i}^*{\bf{V}}_i \right\|^{2}$ is a Chi-squared random variable with $2K$ degrees of freedom, i.e., $\chi_{2K}^{2}$, (b) follows the PGFL of PPP, (c) comes from the variable change $t=R^{-1}r$, (d) follows the first-touch distribution of PPP \eqref{first_touch}, and (e) comes from the variable change $st^{-\beta} = u$ and the assumption that $K=N$.
Plugging \eqref{laplace_iui_sir_ccdf_final} and \eqref{laplace_ici_sir_ccdf_final} into \eqref{theo_sir_ccdf_mu_derivation}, we obtain
\begin{align}
&\mathbb{P}\left[{\rm SIR}_{\rm ZF}^k > \gamma \right] \nonumber \\
&= \left( \frac{1}{1+\gamma  2^{-\frac{B}{N-1}}} \right)^{N-1} \frac{1}{{}_2F_1\left(K, -\frac{2}{\beta}, 1-\frac{2}{\beta}, -\gamma \right)}.
\end{align}
\hfill $\square$

\section{Proof of Theorem 6} \label{proof:thm6}
Following the same logic in the proof of Theorem \ref{theo_low_sing}, instead of finding an exact expression, we rather obtain a lower bound on $\frac{\partial \mathbb{E}\left[\log_2\left(1 + {\rm SIR}_{\rm ZF}^{k} \right) \right] }{ \partial B}$ assuming $T_{\rm c} \rightarrow \infty$ thereby $B^{\star}_{\rm ZF} \rightarrow \infty$. 
\begin{align} \label{them_low_approx_derivative}
&\frac{\partial \mathbb{E}\left[\log_2 \left(1 + {\rm{SIR}}^{k}_{\rm ZF} \right) \right] }{\partial B} \nonumber \\
&= 2^{-\frac{B}{N-1}} \int_{0}^{\infty}  \left( \frac{z}{1+z}\right)\left(\frac{1}{1+z 2^{-\frac{B}{N-1}}} \right)^N \cdot \nonumber \\
&\;\;\;\;\;\;\;\;\;\;\;\;\;\;\;\;\;\;\;\;\;\;\;\;\;\;\;\;\;\;\;\;\;\;\;\;\; \left(\frac{1}{{}_2F_1\left(N, -\frac{2}{\beta}, 1-\frac{2}{\beta}, -z \right)} \right)  {\rm d} z \nonumber \\
& \mathop {\ge}^{(a)} 2^{-\frac{B}{N-1}} \int_{0}^{\infty}  \left( \frac{z}{1+z}\right)\exp\left(-z  N 2^{-\frac{B}{N-1}} \right) \cdot \nonumber \\
&\;\;\;\;\;\;\;\;\;\;\;\;\;\;\;\;\;\;\;\;\;\;\;\;\;\;\;\;\;\;\;\;\;\;\;\;\;\; \left(\frac{1}{1+z^{\frac{2}{\beta}}\frac{2}{\beta}N^{\frac{2}{\beta}}\left(-\Gamma(-\frac{2}{\beta}) \right)}  \right)  {\rm d} z,
\end{align}
where (a) follows Lemma \ref{lem_lower_MGF_ICI} and the inequality of the exponential function $1/(1+x) \ge e^{-x}$ for $x>0$. 
We calculate \eqref{them_low_approx_derivative} by separating the integration range into $\left[0,C\right)$ and $\left[C,\infty\right)$. For $C<\infty$,
\begin{align} 
&2^{-\frac{B}{N-1}} \int_{0}^{C}  \left( \frac{z}{1+z}\right)\exp\left(-z  N 2^{-\frac{B}{N-1}} \right) \cdot \nonumber \\
&\;\;\;\;\;\;\;\;\;\;\;\;\;\;\;\;\;\;\;\;\;\;\;\;\;\;\;\;\;\;\;\;\;\;\;\;\;\;  \left(\frac{1}{1+z^{\frac{2}{\beta}}\frac{2}{\beta}N^{\frac{2}{\beta}}\left(-\Gamma(-\frac{2}{\beta}) \right)}  \right) {\rm d} z \label{thm_approx_low_mu1} \\
& \le 2^{-\frac{B}{N-1}} \int_{0}^{C}  \exp\left(-z  N 2^{-\frac{B}{N-1}} \right) \left(\frac{1}{z^{\frac{2}{\beta}}\frac{2}{\beta}N^{\frac{2}{\beta}}\left(-\Gamma(-\frac{2}{\beta}) \right)}  \right) {\rm d} z \nonumber 
\end{align}
\begin{align}
&= 2^{-\frac{B}{N-1}} \frac{ \left( \Gamma\left(\frac{-2+\beta}{\beta}\right) - \Gamma\left(\frac{-2+\beta}{\beta}, 2^{-\frac{B}{N-1}}CN \right)  \right) } {\frac{2}{\beta} N^{\frac{2}{\beta}} \left(-\Gamma\left(-\frac{2}{\beta} \right) \right)\left(2^{-\frac{B}{N-1}}N \right)^{1-\frac{2}{\beta}} } . \label{thm_approx_low_mu2}
\end{align}
Since $B^{\star}_{\rm ZF} \rightarrow \infty$, $2^{-\frac{B}{N-1}} \rightarrow 0$ for our interest $B$. Then we have
\begin{align}
2^{-\frac{B}{N-1}} \frac{ \left( \Gamma\left(\frac{-2+\beta}{\beta}\right) - \Gamma\left(\frac{-2+\beta}{\beta}, 2^{-\frac{B}{N-1}}CN \right)  \right) } {\frac{2}{\beta} N^{\frac{2}{\beta}} \left(-\Gamma\left(-\frac{2}{\beta} \right) \right)\left(2^{-\frac{B}{N-1}}N \right)^{1-\frac{2}{\beta}}} \rightarrow 0.
\end{align}
Since $\eqref{thm_approx_low_mu1}>0$, $\eqref{thm_approx_low_mu1} \rightarrow 0$ as $2^{-\frac{B}{N-1}}\rightarrow 0$. For this reason, as $2^{-\frac{B}{N-1}} \rightarrow 0$, we have
\begin{align} \label{them_low_part_a}
& 2^{-\frac{B}{N-1}} \int_{0}^{C}  \left( \frac{z}{1+z}\right)\exp\left(-z  N 2^{-\frac{B}{N-1}} \right) \cdot\nonumber \\
&\;\;\;\;\;\;\;\;\;\;\;\;\;\;\;\;\;\;\;\;\;\;\;\;\;\;\;\;\;\;\;\;\;\;\;\;\;\; \left(\frac{1}{1+z^{\frac{2}{\beta}}\frac{2}{\beta}N^{\frac{2}{\beta}}\left(-\Gamma(-\frac{2}{\beta}) \right)}  \right) {\rm d} z \nonumber \\
& \approx 2^{-\frac{B}{N-1}} \int_{0}^{C}  \exp\left(-z  N 2^{-\frac{B}{N-1}} \right) \cdot \nonumber \\
& \;\;\;\;\;\;\;\;\;\;\;\;\;\;\;\;\;\;\;\;\;\;\;\;\;\;\;\;\;\;\;\;\;\;\;\;\;\;\;\;\;\;\;\; \left(\frac{1}{z^{\frac{2}{\beta}}\frac{2}{\beta}N^{\frac{2}{\beta}}\left(-\Gamma(-\frac{2}{\beta}) \right)}  \right) {\rm d} z.
\end{align}
Now we compute the integration over $\left[C,\infty\right)$ of \eqref{them_low_approx_derivative}. For large enough $C$ and $z>C$, we have $\frac{z}{1+z} \approx 1$ and $\frac{1}{1+z^{\frac{2}{\beta}}\frac{2}{\beta}N^{\frac{2}{\beta}}\left(-\Gamma(-\frac{2}{\beta}) \right)}   \approx \frac{1}{z^{\frac{2}{\beta}}\frac{2}{\beta}N^{\frac{2}{\beta}}\left(-\Gamma(-\frac{2}{\beta}) \right)}$.
By leveraging these approximations, we have
\begin{align} \label{them_low_part_b}
&2^{-\frac{B}{N-1}} \int_{C}^{\infty}  \left( \frac{z}{1+z}\right)\exp\left(-z  N 2^{-\frac{B}{N-1}} \right) \cdot \nonumber \\
&\;\;\;\;\;\;\;\;\;\;\;\;\;\;\;\;\;\;\;\;\;\;\;\;\;\;\;\;\;\;\;\;\;\;\;\;\;\;\;\; \left(\frac{1}{1+z^{\frac{2}{\beta}}\frac{2}{\beta}N^{\frac{2}{\beta}}\left(-\Gamma(-\frac{2}{\beta}) \right)}  \right)  {\rm d} z \nonumber \\
&\approx 2^{-\frac{B}{N-1}} \int_{C}^{\infty}  \exp\left(-z  N 2^{-\frac{B}{N-1}} \right) \cdot \nonumber \\
&\;\;\;\;\;\;\;\;\;\;\;\;\;\;\;\;\;\;\;\;\;\;\;\;\;\;\;\;\;\;\;\;\;\;\;\;\;\;\;\;\;\;\;\;\;\;\; \left(\frac{1}{z^{\frac{2}{\beta}}\frac{2}{\beta}N^{\frac{2}{\beta}}\left(-\Gamma(-\frac{2}{\beta}) \right)}  \right)  {\rm d} z.
\end{align}
Combining \eqref{them_low_part_a} and \eqref{them_low_part_b}, \eqref{them_low_approx_derivative} is given as
\begin{align} \label{them_low_approx_final}
&\frac{\partial \mathbb{E}\left[\log_2 \left(1 + {\rm{SIR}}^{k}_{\rm ZF} \right) \right] }{\partial B} \nonumber \\
& \ge 
2^{-\frac{B}{N-1}} \int_{0}^{\infty}  e^{\left(-z  N 2^{-\frac{B}{N-1}} \right)} \left(\frac{1}{z^{\frac{2}{\beta}}\frac{2}{\beta}N^{\frac{2}{\beta}}\left(-\Gamma(-\frac{2}{\beta}) \right)}  \right) {\rm d} z \nonumber \\
&= \frac{\Gamma\left(1-\frac{2}{\beta} \right)}{\frac{2}{\beta}N\left(-\Gamma(-\frac{2}{\beta}) \right)
}  2^{-(\frac{2}{\beta})\frac{B}{N-1}}.
\end{align}
Now we consider the equation of $\eqref{them_low_approx_final} = 1/T_{\rm c}$,
and it has a solution at 
\begin{align}
\tilde B^{\star}_{\rm L,ZF} = \left(N-1 \right)\log_2\left(\frac{\beta \Gamma\left(1-\frac{2}{\beta}\right) T_{\rm c}}{ 2 N \left(-\Gamma(-\frac{2}{\beta}) \right)}\right)^{\frac{\beta}{2}}.
\end{align}
\hfill $\square$

\bibliographystyle{IEEEtran}
\bibliography{ref}

\end{document}